\newmdenv[topline=false,bottomline=false,rightline=false]{leftbox}
\newtheorem{assumption}{Assumption}
\newcommand\wwidehat[1]{%
\savestack{\tmpbox}{\stretchto{%
  \scaleto{%
    \scalerel*[\widthof{\ensuremath{#1}}]{\kern-.6pt\bigwedge\kern-.6pt}%
    {\rule[-\textheight/2]{1ex}{\textheight}}%WIDTH-LIMITED BIG WEDGE
  }{\textheight}% 
}{0.5ex}}%
\stackon[1pt]{#1}{\tmpbox}%
}
\newcommand{\revision}[1]{{\color{black}{#1}}}
\newcommand{\argmin}{\operatornamewithlimits{argmin}}
\newcommand{\X}{\mathcal{X}}
\newcommand{\Y}{\reals^n}
\newcommand{\Hk}{\mathcal{H}_{K}}
\newcommand{\Lin}{\mathcal{L}}
\newcommand{\reals}{\mathbb{R}}
\newcommand{\ip}[2]{\left\langle #1, #2 \right\rangle}
\newcommand{\V}{\mathcal{V}}
\newcommand{\Kb}{K^B}
\newcommand{\Hkb}{\mathcal{H}_K^B}
\newcommand{\Vb}{\mathcal{V}_{B}}
\newcommand{\Vf}{\mathcal{V}_{f}}
\newcommand{\bs}{\mathfrak{b}}
\newcommand{\kwp}{\hat{\kappa}}
\newcommand{\kw}{\kappa}
\newcommand{\Hwp}{\mathcal{H}_{\hat{\kappa}}}
\newcommand{\Hw}{\mathcal{H}_{\kappa}}
\newcommand{\Sj}{\mathbb{S}}
\newcommand{\Sjpp}{\mathbb{S}^{>0}}
\newcommand{\wl}{\underline{w}}
\newcommand{\wu}{\overline{w}}
\newcommand{\xs}{x_i}
\newcommand{\us}{u_i}
\newcommand{\dx}{\delta_x}
\newcommand{\ddx}{\dot{\delta}_x}
\renewcommand{\baselinestretch}{0.9}
\title{Learning Stabilizable Dynamical Systems\\ via Control Contraction Metrics}
\author{Sumeet Singh\inst{1} \and Vikas Sindhwani\inst{2}\and Jean-Jacques E. Slotine\inst{3}\and Marco Pavone\inst{1}
\thanks{This work was supported by NASA under the Space Technology Research Grants Program, Grant NNX12AQ43G, and by the King Abdulaziz City for Science and Technology (KACST).}
}
\institute{Dept. of Aeronautics and Astronautics, Stanford University \\ \texttt{\{ssingh19,pavone\}@stanford.edu}
\and
Google Brain Robotics, New York \\ \texttt{sindhwani@google.com}
\and
Dept. of Mechanical Engineering, Massachusetts Institute of Technology \\ \texttt{jjs@mit.edu}}
\begin{document}
\maketitle

%===============================================================================

\vspace{-6mm}
\begin{abstract}
We propose a novel  framework for learning stabilizable nonlinear dynamical systems for continuous control tasks in robotics. The key idea is to develop a new control-theoretic regularizer for dynamics fitting rooted in the notion of {\it stabilizability}, which guarantees that the learned system can be accompanied by a robust controller capable of stabilizing {\it any} open-loop trajectory that the system may generate. By leveraging tools from contraction theory, statistical learning, and  convex optimization, we provide a general and tractable \revision{semi-supervised} algorithm to learn stabilizable dynamics, which can be applied to complex underactuated systems. We validated the proposed algorithm on a simulated planar quadrotor system and observed \revision{notably improved trajectory generation and tracking performance with the control-theoretic regularized model over models learned using traditional regression techniques, especially when using a small number of demonstration examples}. The results presented illustrate the need to infuse standard model-based reinforcement learning algorithms with concepts drawn from nonlinear control theory for improved reliability. 
\end{abstract}
\vspace{-6mm}

% Two or three meaningful keywords should be added here
\keywords{Model-based reinforcement learning, contraction theory, robotics.} 

%===============================================================================

\section{Introduction}

The problem of efficiently and accurately estimating an unknown dynamical system, \begin{equation}
    \dot{x}(t) = f(x(t),u(t)), 
\label{ode}
\end{equation} from a small set of sampled trajectories, where $x \in \reals^n$ is the state and $u \in \reals^m$ is the control input, is the central task in model-based Reinforcement Learning (RL). In this setting, a robotic agent strives to pair an estimated  dynamics model with a feedback policy in order to optimally act in a dynamic and uncertain environment.  The model of the dynamical system can be continuously updated as the robot experiences the consequences of its actions, and the improved model can be  leveraged for different tasks affording a natural form of transfer learning. When it works, model-based Reinforcement Learning typically offers major improvements in sample efficiency in comparison to state of the art RL methods such as Policy Gradients~\cite{ChuaCalandraEtAl2018,NagabandiKahnEtAl2017} that do not explicitly estimate the underlying system. Yet, all too often, when standard supervised learning with powerful function approximators such as Deep Neural Networks and Kernel Methods are applied to model complex dynamics, the resulting controllers do not perform at par with model-free RL methods in the limit of increasing sample size, due to compounding errors across long time horizons. The main goal of this paper is to develop a new control-theoretic regularizer for dynamics fitting rooted in the notion of {\it stabilizability}, which guarantees that the learned system can be accompanied by a robust controller capable of stabilizing any trajectory that the system may generate.

%===============================================================================

%\section{Problem Statement}
%
\iffalse
Consider a robotic system whose dynamics are described by the generic nonlinear differential equation
\begin{equation}
    \dot{x}(t) = f(x(t),u(t)), 
\label{ode}
\end{equation}
where $x \in \reals^n$ is the state, $u \in \reals^m$ is the control input. We assume that the function $f$ is smooth. A state-input trajectory satisfying~\eqref{ode} is denoted as the pair $(x,u)$. The key concept leveraged in this work is the notion of \emph{stabilizability}. 
\fi
Formally, a reference state-input trajectory pair $(x^*(t), u^*(t)),\ t \in [0,T]$ for system~\eqref{ode} is termed \emph{exponentially stabilizable at rate $\lambda>0$} if there exists a feedback controller $k : \reals^n \times \reals^n \rightarrow \reals^m$ such that the solution $x(t)$ of the system:
\[
    \dot{x}(t) = f(x(t), u^*(t) + k(x^*(t),x(t))),
\]
converges exponentially to $x^*(t)$ at rate $\lambda$. That is,
\begin{equation}
    \|x(t) - x^*(t)\|_2 \leq C \|x(0) - x^*(0)\|_2 \ e^{-\lambda t}
\label{exp_stab}
\end{equation}
for some constant $C>0$. The \emph{system}~\eqref{ode} is termed \emph{exponentially stabilizable at rate $\lambda$} in an open, connected, bounded region $\X \subset \reals^n$ if all state trajectories $x^*(t)$ satisfying $x^*(t) \in \X,\ \forall t \in [0,T]$ are exponentially stabilizable at rate $\lambda$. 

%\ssmargin{relax stabilizability to boundedness - Lyapunov stable, or asymptotic stable? Leave exponential stability to when we talk about contraction}{}

{\bf Problem Statement}: In this work, we assume that the dynamics function $f(x,u)$ is unknown to us and we are instead provided with a dataset of tuples $\{(\xs, \us, \dot{x}_i)\}_{i=1}^{N}$ taken from a collection of observed trajectories (e.g., expert demonstrations) on the robot. Our objective is to solve the problem:
\begin{align}
    \min_{\hat{f} \in \mathcal{H}} \quad & \sum_{i=1}^{N} \left\| \hat{f}(\xs,\us) - \dot{x}_i \right\|_2^2 + \mu \|\hat{f}\|^2_{\mathcal{H}} \label{prob_gen} \\
    \text{s.t.} \quad & \text{$\hat{f}$ is stabilizable,}
\end{align}
where $\mathcal{H}$ is an appropriate normed function space and $\mu >0$ is a regularization parameter. Note that we use $(\hat{\cdot})$ to differentiate the learned dynamics from the true dynamics. We expect that for systems that are indeed stabilizable, enforcing such a constraint may drastically \emph{prune the hypothesis space, thereby playing the role of a ``control-theoretic'' regularizer} that is potentially more powerful and ultimately, more pertinent for the downstream control task of generating and tracking new trajectories.
 
{\bf Related Work}:  The simplest approach to learning dynamics is to ignore stabilizability and treat the problem as a standard one-step time series regression task~\cite{NagabandiKahnEtAl2017,ChuaCalandraEtAl2018,DeisenrothRasmussen2011}. However, coarse dynamics models trained on limited training data typically generate trajectories that rapidly diverge from expected paths, inducing controllers that are ineffective when applied to the true system. This divergence can be reduced by expanding the training data with corrections to boost multi-step prediction accuracy~\cite{VenkatramanHebertEtAl2015, VenkatramanCapobiancoEtAl2016}. In recent work on uncertainty-aware model-based RL, policies~\cite{NagabandiKahnEtAl2017,ChuaCalandraEtAl2018} are optimized with respect to stochastic rollouts from probabilistic dynamics models that are iteratively improved in a model predictive control loop. Despite being effective, these methods are still heuristic in the sense that the existence of a stabilizing feedback controller is not explicitly guaranteed. 

Learning dynamical systems satisfying some desirable stability properties (such as asymptotic stability about an equilibrium point, e.g., for point-to-point motion) has been studied in the autonomous case, $\dot{x}(t) = f(x(t))$, in the context of imitation learning. In this line of work, one assumes perfect knowledge and invertibility of the robot's \emph{controlled} dynamics to solve for the input that realizes this desirable closed-loop motion~\cite{LemmeNeumannEtAl2014,Khansari-ZadehKhatib2017,SindhwaniTuEtAl2018,RavichandarSalehiEtAl2017,Khansari-ZadehBillard2011,MedinaBillard2017}. Crucially, in our work, we \emph{do not} require knowledge, or invertibility of the robot's controlled dynamics. We seek to learn the full controlled dynamics of the robot, under the constraint that the resulting learned dynamics generate dynamically feasible, and most importantly, stabilizable trajectories. Thus, this work generalizes existing literature by additionally incorporating the controllability limitations of the robot within the learning problem. In that sense, it is in the spirit of recent model-based RL techniques that exploit control theoretic notions of stability to guarantee model safety during the learning process~\cite{BerkenkampTurchettaEtAl2017}. However, unlike the work in~\cite{BerkenkampTurchettaEtAl2017} which aims to maintain a local region of attraction near a known safe operating point, we consider a stronger notion of safety -- that of stabilizability, that is, the ability to keep the system within a bounded region of any exploratory open-loop trajectory.

Potentially, the tools we develop may also be used to extend standard adaptive robot control design, such as~\cite{SlotineLi1987} -- a technique which achieves stable concurrent learning and control using a combination of physical basis functions and general mathematical expansions, e.g. radial basis function approximations~\cite{SannerSlotine1992}. Notably, our work allows us to handle complex underactuated systems, a consequence of the significantly more powerful function approximation framework developed herein, as well as of the use of 
a differential (rather than classical) Lyapunov-like setting, as we shall detail.

%{\color{red} Although we have to say something about adaptive control, this is actually a rather
%separate point, as adaptive control does not assume measurement of $\dot{x} \ $.}
%{\color{blue} Sumeet: I will modify this point; will re-phrase discussion on adaptive as a separate point, particularly in context of underactuated systems}

%{\color{red} Also, we should qualify a little what we mean
%by RL, in robotics it evokes e.g. the classical work of
%Kenji Doya where a humanoid robot leanrs to stand up by itself.}
%{\color{blue} See above points.
%}

{\bf Statement of Contributions:} Stabilizability of trajectories is a complex task in non-linear control. In this work, we leverage recent advances in contraction theory for control design through the use of \emph{control contraction metrics} (CCM)~\cite{ManchesterSlotine2017} that turns stabilizability constraints into convex Linear Matrix Inequalities (LMIs). Contraction theory~\cite{LohmillerSlotine1998} is a method of analyzing nonlinear systems in a differential framework, i.e., via the associated variational system~\cite[Chp 3]{CrouchSchaft1987}, and is focused on the study of convergence between pairs of state trajectories towards each other. Thus, at its core, contraction explores a stronger notion of stability -- that of incremental stability between solution trajectories, instead of the stability of an equilibrium point or invariant set. Importantly, we harness recent results in~\cite{ManchesterTangEtAl2015,ManchesterSlotine2017,SinghMajumdarEtAl2017} that illustrate how to use contraction theory to obtain a \emph{certificate} for trajectory stabilizability and an accompanying tracking controller with exponential stability properties. In Section~\ref{sec:ccms}, we provide a brief summary of these results, which in turn will form the foundation of this work.
 
 Our paper makes four primary contributions. First, we formulate the learning stabilizable dynamics problem through the lens of control contraction metrics (Section~\ref{sec:prob}). Second, under an arguably weak assumption on the sparsity of the true dynamics model, we present a finite-dimensional optimization-based solution to this problem by leveraging the powerful framework of vector-valued Reproducing Kernel Hilbert Spaces (Section~\ref{sec:finite}). We further motivate this solution from a standpoint of viewing the stabilizability constraint as a novel control-theoretic \emph{regularizer} for dynamics learning. Third, we develop a tractable algorithm leveraging alternating convex optimization problems and adaptive sampling to iteratively solve the finite-dimensional optimization problem (Section~\ref{sec:soln}). Finally, we verify the proposed approach on a 6-state, 2-input planar quadrotor model where we demonstrate that naive regression-based dynamics learning can yield estimated models that \revision{generate completely unstabilizable trajectories}. In contrast, \revision{the control-theoretic regularized model generates vastly superior quality, trackable trajectories, especially} for smaller training sets (Section~\ref{sec:result}).

%\ssmargin{add the following: Blocher contraction (learning autonomous systems with a correction term to ensure contraction holds. correction term smoothly modulated to go to 0 near demonstrations and in full effect away from demonstrations.}{}
\vspace{-2mm}
\section{Review of Contraction Theory} \label{sec:ccms}
\vspace{-2mm}

The core principle behind contraction theory~\cite{LohmillerSlotine1998} is to study the evolution of distance between any two \emph{arbitrarily close} neighboring trajectories and drawing conclusions on the distance between \emph{any} pair of trajectories.  Given an autonomous system of the form: $\dot{x}(t) = f(x(t))$, consider two neighboring trajectories separated by an infinitesimal (virtual) displacement $\delta_x$ (formally, $\delta_x$ is a vector in the tangent space $\mathcal{T}_x \X$ at $x$). The dynamics of this virtual displacement are given by:
\[
    \dot{\delta}_x = \dfrac{\partial f}{\partial x} \delta_x,
\]
where $\partial f/\partial x$ is the Jacobian of $f$. The dynamics of the infinitesimal squared distance $\delta_x^T\delta_x$ between these two trajectories is then given by:
\[
    \dfrac{d}{dt}\left( \delta_x ^T \delta_x \right) = 2 \delta_x ^T \dfrac{\partial f}{\partial x} \delta_x.
\]
Then, if the (symmetric part) of the Jacobian matrix $\partial f/\partial x$ is \emph{uniformly} negative definite, i.e., 
\[
    \sup_{x} \lambda_{\max}\left(\dfrac{1}{2}\wwidehat{\dfrac{\partial f(x)}{\partial x}}\right) \leq -\lambda < 0,
\]
where $\wwidehat{(\cdot)} := (\cdot) + (\cdot)^T$, $\lambda > 0$, one has that the squared infinitesimal length $\delta_x^T\delta_x$ is exponentially convergent to zero at rate $2\lambda$. By path integration of $\delta_x$ between \emph{any} pair of trajectories, one has that the distance between any two trajectories shrinks exponentially to zero. The vector field is thereby referred to be \emph{contracting at rate $\lambda$}.

Contraction metrics generalize this observation by considering as infinitesimal squared length distance, a symmetric positive definite function $V(x,\delta_x) = \delta_x^T M(x)\delta_x$, where $M: \X \rightarrow \Sjpp_n$, is a mapping from $\X$ to the set of uniformly positive-definite $n\times n$ symmetric matrices. Formally, $M(x)$ may be interpreted as a Riemannian metric tensor, endowing the space $\X$ with the Riemannian squared length element $V(x,\delta_x)$. A fundamental result in contraction theory~\cite{LohmillerSlotine1998} is that \emph{any} contracting system admits a contraction metric $M(x)$ such that the associated function $V(x,\delta_x)$ satisfies:
\[
    \dot{V}(x,\delta_x) \leq - 2\lambda V(x,\delta_x), \quad \forall (x,\delta_x) \in \mathcal{T}\X,
\]
for some $\lambda >0$. Thus, the function $V(x,\delta_x)$ may be interpreted as a \emph{differential Lyapunov function}. 
\vspace{-2mm}
\subsection{Control Contraction Metrics}

Control contraction metrics (CCMs) generalize contraction analysis to the controlled dynamical setting, in the sense that the analysis searches \emph{jointly} for a controller design and the metric that describes the contraction properties of the resulting closed-loop system. Consider dynamics of the form:
\begin{equation}
    \dot{x}(t) = f(x(t)) + B(x(t)) u(t),
\label{dyn}
\end{equation}
where $B: \X \rightarrow \reals^{n\times m}$ is the input matrix, and denote $B$ in column form as $(b_1,\ldots,b_m)$ and $u$ in component form as $(u^1,\ldots,u^m)$. To define a CCM, analogously to the previous section, we first analyze the variational dynamics, i.e., the dynamics of an infinitesimal displacement $\delta_x$:
\begin{equation}
	\ddx= \overbrace{\bigg(\dfrac{\partial f(x)}{\partial x}  + \sum_{j=1}^m u^j \dfrac{\partial b_j(x)}{\partial x}\bigg)}^{:= A(x,u)}\delta_{x}+ B(x)\delta_{u},
\label{var_dyn_c}
\end{equation}
where $\delta_u$ is an infinitesimal (virtual) control vector at $u$ (i.e., $\delta_u$ is a vector in the control input tangent space, i.e., $\reals^m$). A CCM for the system $\{f,B\}$ is a uniformly positive-definite symmetric matrix function $M(x)$ such that there exists a function $\delta_u(x,\dx,u)$ so that the function $V(x,\dx) = \dx^T M(x) \dx$ satisfies
\begin{equation}
\begin{split}
    \dot{V}(x,\dx,u) &= \delta_{x}^{T}\left(\partial_{f+Bu}M(x)+ \wwidehat{M(x)A(x,u)} \right) \delta_{x} + 2 \delta_{x}^{T}M(x)B(x)\delta_{u} \\
    &\leq -2\lambda V(x,\dx), \quad \forall (x,\dx) \in \mathcal{T}\X,\ u \in \reals^m,
\end{split}
\label{V_dot}
\end{equation}
where $\partial_g M(x)$ is the matrix with element $(i,j)$ given by Lie derivative of $M_{ij}(x)$ along the vector $g$. Given the existence of a valid CCM, one then constructs a stabilizing (in the sense of eq.~\eqref{exp_stab}) feedback controller $k(x^*,x)$ as described in Appendix~\ref{ccm_appendix}.

Some important observations are in order. First, the function $V(x,\dx)$ may be interpreted as a differential \emph{control} Lyapunov function, in that, there exists a stabilizing differential controller $\delta_u$ that stabilizes the variational dynamics~\eqref{var_dyn_c} in the sense of eq.~\eqref{V_dot}. Second, and more importantly, we see that by stabilizing the variational dynamics (essentially an infinite family of linear dynamics in $(\delta_x,\delta_u)$) pointwise, everywhere in the state-space, we obtain a stabilizing controller for the original nonlinear system. Crucially, this is an exact stabilization result, not one based on local linearization-based control. Consequently, one can show several useful properties, such as invariance to state-space transformations~\cite{ManchesterSlotine2017} and robustness~\cite{SinghMajumdarEtAl2017,ManchesterSlotine2018}.  Third, the CCM approach only requires a weak form of controllability, and therefore is not restricted to feedback linearizable (i.e., invertible) systems. 

%===============================================================================
\vspace{-2mm}
\section{Problem Formulation}\label{sec:prob}
\vspace{-2mm}

Using the characterization of stabilizability using CCMs, we can now formalize our problem as follows. Given our dataset of tuples $\{(\xs,\us,\dot{x}_i)\}_{i=1}^{N}$, the objective of this work is to learn the dynamics functions $f(x)$ and $B(x)$ in eq.~\eqref{dyn}, subject to the constraint that there exists a valid CCM $M(x)$ for the learned dynamics. \revision{That is, the CCM $M(x)$ plays the role of a \emph{certificate} of stabilizability for the learned dynamics.}

As shown in~\cite{ManchesterSlotine2017}, a necessary and sufficient characterization of a CCM $M(x)$ is given in terms of its dual $W(x):= M(x)^{-1}$ by the following two conditions:
\begin{align}
	 B_{\perp}^{T}\left( \partial_{b_j}W(x) - \wwidehat{\dfrac{\partial b_j(x)}{\partial x}W(x)} \right)B_{\perp}= 0, \ j = 1,\ldots, m \quad &\forall x \in \X,
\label{killing_A} \\
	   \underbrace{B_{\perp}(x)^{T}\left(-\partial_{f}W(x) + \wwidehat{\dfrac{\partial f(x)}{\partial x}W(x)} + 2\lambda W(x) \right)B_{\perp}(x)}_{:=F(x;f,W,\lambda)} \prec 0, \quad &\forall x \in \X, \label{nat_contraction_W}
\end{align}
where $B_{\perp}$ is the annihilator matrix for $B$, i.e., $B(x)^T B_\perp(x) = 0$ for all $x$. In the definition above, we write $F(x;W,f,\lambda)$ since $\{W,f,\lambda\}$ will be optimization variables in our formulation. Thus, our learning task reduces to finding the functions $\{f,B,W\}$ and constant $\lambda$ that jointly satisfy the above constraints, while minimizing an appropriate regularized regression loss function. Formally, problem~\eqref{prob_gen} can be re-stated as: \vspace{-0.2cm}
\begin{subequations}\label{prob_gen2}
\begin{align}
&\min_{\substack{\hat{f} \in \mathcal{H}^{f}, \hat{b}_j \in \mathcal{H}^{B}, j =1,\ldots,m \\ W \in \mathcal{H}^W \\ \wl, \wu, \lambda \in \reals_{>0}}} && \overbrace{\sum_{i=1}^{N} \left\| \hat{f}(\xs) + \hat{B}(\xs) \us - \dot{x}_i \right\|_2^2  + \mu_f \| \hat{f} \|^2_{\mathcal{H}^f} + \mu_b \sum_{j=1}^{m} \| \hat{b}_j \|^2_{\mathcal{H}^B}}^{:= J_d(\hat{f},\hat{B})} + \nonumber \\
& \qquad && + \underbrace{(\wu-\wl) +  \mu_w \|W\|^2_{\mathcal{H}^W}}_{:=J_m(W,\wl,\wu)}  \\
&\qquad \text{subject to} && \text{eqs.~\eqref{killing_A},~\eqref{nat_contraction_W}} \quad \forall x \in \X, \\
& && \wl I_n \preceq W(x) \preceq \wu I_n, \quad \forall x \in \X, \label{W_unif}
\end{align}
\end{subequations}
where $\mathcal{H}^f$ and $\mathcal{H}^B$ are appropriately chosen $\Y$-valued function classes on $\X$ for $\hat{f}$ and $\hat{b}_j$ respectively, and $\mathcal{H}^W$ is a suitable $\Sjpp_n$-valued function space on $\X$. The objective is composed of a dynamics term $J_d$ -- consisting of regression loss and regularization terms, and a metric term $J_m$ -- consisting of a condition number surrogate loss on the metric $W(x)$ and a regularization term. The metric cost term $\wu-\wl$ is motivated by the observation that the state tracking error (i.e., $\|x(t)-x^*(t)\|_2$) in the presence of bounded additive disturbances is proportional to the ratio $\wu/\wl$ (see~\cite{SinghMajumdarEtAl2017}).

Notice that the coupling constraint~\eqref{nat_contraction_W} is a bi-linear matrix inequality in the decision variables sets $\{\hat{f},\lambda\}$ and $W$. Thus at a high-level, a solution algorithm must consist of alternating between two convex sub-problems, defined by the objective/decision variable pairs $(J_d, \{\hat{f},\hat{B},\lambda\})$ and $(J_m, \{W,\wl,\wu\})$.

\vspace{-3mm}
\section{Solution Formulation}\label{sec:reg}
\vspace{-1mm}

When performing dynamics learning on a system that is a priori \emph{known} to be exponentially stabilizable at some strictly positive rate $\lambda$, the constrained problem formulation in~\eqref{prob_gen2} follows naturally given the assured \emph{existence} of a CCM. Albeit, the infinite-dimensional nature of the constraints is a considerable technical challenge, broadly falling under the class of \emph{semi-infinite} optimization~\cite{HettichKortanek1993}. Alternatively, for systems that are not globally exponentially stabilizable in $\X$, one can imagine that such a constrained formulation may lead to adverse effects on the learned dynamics model. 

Thus, in this section we propose a relaxation of problem~\eqref{prob_gen2} motivated by the concept of regularization. Specifically, constraints~\eqref{killing_A} and~\eqref{nat_contraction_W} capture this notion of stability of infinitesimal deviations \emph{at all points} in the space $\X$. They stem from requiring that $\dot{V} \leq -2\lambda V(x,\dx)$ in eq~\eqref{V_dot} when $\dx^T M(x) B(x) = 0$, i.e., when $\delta_u$ can have no effect on $\dot{V}$. This is nothing but the standard control Lyapunov inequality, applied to the differential setting. Constraint~\eqref{killing_A} sets to zero, the terms in~\eqref{V_dot} affine in $u$, while constraint~\eqref{nat_contraction_W} enforces this ``natural" stability condition. 

The simplifications we make are (i) relax constraints~\eqref{nat_contraction_W} and~\eqref{W_unif} to hold pointwise over some \emph{finite} constraint set $X_c \in \X$, and (ii) assume a specific sparsity structure for input matrix estimate $\hat{B}(x)$. We discuss the pointwise relaxation here; the sparsity assumption on $\hat{B}(x)$ is discussed in the following section and Appendix~\ref{app:justify_B}.

First, from a purely mathematical standpoint, the pointwise relaxation of~\eqref{nat_contraction_W} and \eqref{W_unif} is motivated by the observation that as the CCM-based controller is exponentially stabilizing, we only require the differential stability condition to hold locally (in a tube-like region) with respect to the provided demonstrations. By continuity of eigenvalues for continuously parameterized entries of a matrix, it is sufficient to enforce the matrix inequalities at a sampled set of points~\cite{Lax2007}.

Second, enforcing the existence of such an ``approximate" CCM seems to have an impressive regularization effect on the learned dynamics that is more meaningful than standard regularization techniques used in for instance, ridge or lasso regression. Specifically, problem~\eqref{prob_gen2}, and more generally, problem~\eqref{prob_gen} can be viewed as the \emph{projection} of the best-fit dynamics onto the set of stabilizable systems. This results in dynamics models that jointly balance regression performance and stabilizablity, ultimately yielding systems whose generated trajectories are notably easier to track. This effect of regularization is discussed in detail in our experiments in Section~\ref{sec:result}.

\revision{Practically, the finite constraint set $X_c$, with cardinality $N_c$, includes all $\xs$ in the regression training set of $\{(\xs,\us,\dot{x}_i)\}_{i=1}^{N}$ tuples. However, as the LMI constraints are \emph{independent} of $\us,\dot{x}_i$, the set $X_c$ is chosen as a strict superset of $\{\xs\}_{i=1}^{N}$ (i.e., $N_c > N$) by randomly sampling additional points within $\X$, drawing parallels with semi-supervised learning.}

\vspace{-2mm}
\subsection{Sparsity of Input Matrix Estimate $\hat{B}$} \label{sec:B_simp}
\vspace{-2mm}

While a pointwise relaxation for the matrix inequalities is reasonable, one cannot apply such a relaxation to the exact equality condition in~\eqref{killing_A}. Thus, the second simplification made is the following assumption, reminiscent of control normal form equations.
\begin{assumption}\label{ass:B_simp}
Assume $\hat{B}(x)$ to take the following sparse representation:
\begin{equation}
    \hat{B}(x) = \begin{bmatrix} O_{(n-m)\times m} \\ \bs(x) \end{bmatrix},
\label{B_simp}
\end{equation}
where $\bs(x)$ is an invertible $m\times m$ matrix for all $x\in \X$. 
\end{assumption}
For the assumed structure of $\hat{B}(x)$, a valid $B_{\perp}$ matrix is then given by:
\begin{equation}
    B_{\perp} = \begin{bmatrix} I_{n - m} \\ O_{m \times (n-m)} \end{bmatrix}.
    \label{B_perp}
\end{equation}
Therefore, constraint~\eqref{killing_A} simply becomes:
\[
	\partial_{\hat{b}_j} W_{\perp} (x) = 0, \quad j = 1,\ldots,m.
\]
where $W_{\perp}$ is the upper-left $(n-m)\times (n-m)$ block of $W(x)$. Assembling these constraints for the $(p,q)$ entry of $W_{\perp}$, i.e., $w_{\perp_{pq}}$, we obtain:
\[
	 \begin{bmatrix} \dfrac{ \partial w_{\perp_{pq}} (x) }{\partial x^{(n-m)+1}} & \cdots & \dfrac{\partial w_{\perp_{pq}} (x) }{\partial x^{n}} \end{bmatrix} \bs(x) = 0.
\]
Since the matrix $\bs(x)$ in~\eqref{B_simp} is assumed to be invertible, the \emph{only} solution to this equation is $\partial w_{\perp_{pq}}/ \partial x^i = 0$ for $i = (n-m)+1,\ldots,n$, and all $(p,q) \in \{1,\ldots,(n-m)\}$. That is, $W_{\perp}$ cannot be a function of the last $m$ components of $x$ -- an elegant simplification of constraint~\eqref{killing_A}. Due to space limitations, justification for this sparsity assumption is provided in Appendix~\ref{app:justify_B}.

\subsection{Finite-dimensional Optimization}\label{sec:finite}

We now present a tractable finite-dimensional optimization for solving problem~\eqref{prob_gen2} under the two simplifying assumptions \revision{introduced in the previous sections}. The derivation of the solution algorithm itself is presented in Appendix~\ref{sec:deriv}, and relies extensively on vector-valued Reproducing Kernel Hilbert Spaces. 

\begin{leftbox}
\begin{itemize}[leftmargin=0.4in]
    \item[{\bf Step 1:}] Parametrize the functions $\hat{f}$, the columns of $\hat{B}(x)$: $\{\hat{b}_j\}_{j=1}^{m}$, and $\{w_{ij}\}_{i,j=1}^{n}$ as a linear combination of features. That is, 
\begin{align}
    \hat{f}(x) &= \Phi_f(x)^T \alpha, \label{param_1}\\
    \hat{b}_j(x) &= \Phi_b(x)^T \beta_j  \quad j \in \{1,\ldots, m\}, \\
    w_{ij}(x) &= \begin{cases} \hat{\phi}_w(x)^T \hat{\theta}_{ij} &\text{ if }\quad  (i,j) \in \{1,\ldots,n-m\}, \\
    \phi_w(x)^T \theta_{ij} &\text{ else}, \label{param_2}
    \end{cases}
\end{align}
where $\alpha \in \reals^{d_f}$, $\beta_j \in \reals^{d_b}$, $\hat{\theta}_{ij}, \theta_{ij} \in \reals^{d_w}$ are constant vectors to be optimized over, and $\Phi_f : \X \rightarrow \reals^{d_f\times n}$, $\Phi_b : \X \rightarrow \reals^{d_b \times n}$, $\hat{\phi}_w : \X \rightarrow \reals^{d_w}$ and $\phi_w : \X \rightarrow \reals^{d_w}$ are a priori chosen feature mappings. To enforce the sparsity structure in~\eqref{B_simp}, the feature matrix $\Phi_b$ must have all 0s in its first $n-m$ columns. The features $\hat{\phi}_w$ are distinct from $\phi_w$ in that the former are only a function of the first $n-m$ components of $x$ (as per Section~\ref{sec:B_simp}).
While one can use any function approximator (e.g., neural nets), we motivate this parameterization from a perspective of Reproducing Kernel Hilbert Spaces (RKHS); see Appendix~\ref{sec:deriv}.
\newline
\item[{\bf Step 2:}] Given positive regularization constants $\mu_f, \mu_b, \mu_w$ and positive tolerances $(\delta_\lambda,\epsilon_\lambda)$ and $(\delta_{\wl}, \epsilon_{\wl})$, solve:
\begin{subequations}\label{learn_finite}
\begin{align}
    \min_{\alpha,\beta_j, \hat{\theta}_{ij}, \theta_{ij}, \wl, \wu,\lambda} \quad &  \overbrace{\sum_{k=1}^{N} \| \hat{f}(\xs)+\hat{B}(\xs)u_i - \dot{x}_i \|_2^2 + \mu_f \|\alpha\|_2^2 + \mu_b \sum_{j=1}^{m} \|\beta_j\|_2^2}^{:=J_d}  \nonumber \\
    \quad & \quad + \underbrace{(\wu-\wl) +  \mu_w\sum_{i,j} \|\tilde{\theta}_{ij}\|_2^2}_{:=J_m}  \\
    \text{s.t.} \quad & F(\xs;\alpha,\tilde{\theta}_{ij}, \lambda + \epsilon_{\lambda}) \preceq 0, \quad \forall \xs \in X_c, \label{nat_finite} \\
    \quad & (\wl + \epsilon_{\wl})I_{n} \preceq W(\xs) \preceq \wu I_n, \quad \forall \xs \in X_c, \label{uniform_finite} \\
    \quad & \theta_{ij} = \theta_{ji},  \hat{\theta}_{ij} = \hat{\theta}_{ji} \label{sym_finite} \\
    \quad &\lambda \geq \delta_{\lambda}, \quad  \wl \geq \delta_{\wl}, \label{tol_finite}
\end{align}
\end{subequations}
where $\tilde{\theta}_{ij}$ is used as a placeholder for $\theta_{ij}$ and $\hat{\theta}_{ij}$ to simplify notation.
\end{itemize}
\end{leftbox}

We wish to highlight the following key points regarding problem~\eqref{learn_finite}. 
Constraints \eqref{nat_finite} and~\eqref{uniform_finite} are the pointwise relaxations of~\eqref{nat_contraction_W} and~\eqref{W_unif} respectively. Constraint~\eqref{sym_finite} captures the fact that $W(x)$ is a symmetric matrix. Finally, constraint~\eqref{tol_finite} imposes some tolerance requirements to ensure a well conditioned solution. Additionally, the tolerances $\epsilon_{\delta}$ and $\epsilon_{\wl}$ are used to account for the pointwise relaxations of the matrix inequalities. A key challenge is to efficiently solve this constrained optimization problem, given a potentially large number of constraint points in $X_c$. In the next section, we present an iterative algorithm and an adaptive constraint sampling technique to solve problem~\eqref{learn_finite}.

%A class of underactuated systems captured by our dynamics representation cannot be stabilized around equilibrium points using time-invariant continuous state feedback. Thus, the pointwise relaxation is not only practical, but also necessary, since for such systems, one cannot hope to find a uniformly (i.e., even at equilibrium points) valid CCM.

%===============================================================================
\vspace{-2mm}
\section{Solution Algorithm} \label{sec:soln}
\vspace{-2mm}
The fundamental structure of the solution algorithm consists of alternating between the dynamics and metric sub-problems derived from problem~\eqref{learn_finite}. We also make a few additional modifications to aid tractability, most notable of which is the use of a \emph{dynamically} updating set of constraint points $X_c^{(k)}$ at which the LMI constraints are enforced at the $k^{\text{th}}$ iteration. In particular $X_c^{(k)} \subset X_c$ with $N_c^{(k)}:= |X_c^{(k)}|$ being ideally much less than $N_c$, the cardinality of the full constraint set $X_c$. Formally, each major iteration $k$ is characterized by three minor steps (sub-problems):
\begin{leftbox}
\begin{enumerate}
\item Finite-dimensional dynamics sub-problem at iteration $k$:
\begin{subequations} \label{finite_dyn}
\begin{align}
    \min_{\substack{\alpha,\beta_j, j=1,\ldots,m,\ \lambda \\ s \geq 0}} \quad & J_d(\alpha,\beta) + \mu_s\|s\|_1 \\
    \text{s.t.} \quad & F(\xs;\alpha,\tilde{\theta}^{(k-1)}_{ij}, \lambda + \epsilon_{\lambda}) \preceq s(\xs)I_{n-m}  \quad \forall \xs \in X_c^{(k)} \\
    \quad & s(\xs) \leq \bar{s}^{(k-1)}  \quad \forall \xs \in X_c^{(k)}\\
    \quad & \lambda \geq \delta_{\lambda},
\end{align}
\end{subequations}
where $\mu_s$ is an additional regularization parameter for $s$ -- an $N_c^{(k)}$ dimensional non-negative slack vector. The quantity $\bar{s}^{(k-1)}$ is defined as
\[
    \begin{split}
    \bar{s}^{(k-1)} &:= \max_{\xs \in X_c} \lambda_{\max} \left(F^{(k-1)}(\xs)\right), \quad \text{where} \\
    F^{(k-1)}(\xs) &:= F(\xs;\alpha^{(k-1)},\tilde{\theta}^{(k-1)}_{ij}, \lambda^{(k-1)} +\epsilon_{\lambda}).
    \end{split}
\]
That is, $\bar{s}^{(k-1)}$ captures the worst violation for the $F(\cdot)$ LMI over the entire constraint set $X_c$, given the parameters at the end of iteration $k-1$. 
\item Finite-dimensional metric sub-problem at iteration $k$:
\begin{subequations}\label{finite_met}
\begin{align}
    \min_{\tilde{\theta}_{ij},\wl,\wu,  s \geq 0} \quad & J_m(\tilde{\theta}_{ij},\wl,\wu) + (1/\mu_s)\|s\|_1 \\
    \text{s.t.} \quad & F(\xs;\alpha^{(k)},\tilde{\theta}_{ij}, \lambda^{(k)} + \epsilon_{\lambda}) \preceq s(\xs)I_{n-m}  \quad \forall \xs \in X_c^{(k)} \\
    \quad & s(\xs) \leq \bar{s}^{(k-1)} \quad \forall \xs \in X_c^{(k)} \\
    \quad &  (\wl + \epsilon_{\wl})I_{n} \preceq W(\xs) \preceq \wu I_n, \quad \forall \xs \in X_c^{(k)}, \\
    \quad & \wl \geq \delta_{\wl}.
\end{align}
\end{subequations}

\item Update $X_c^{(k)}$ sub-problem. Choose a tolerance parameter $\delta>0$. Then, define
    \[
        \nu^{(k)}(\xs) := \max \left\{ \lambda_{\max} \left(F^k(\xs)\right) , \lambda_{\max} \left((\delta_{\wl}+\epsilon_{\delta})I_n - W(\xs) \right) \right \}, \quad \forall \xs \in X_c,
    \]
    and set
    \begin{equation}
        X_{c}^{(k+1)} :=  \left\{ \xs \in X_c^{(k)} : \nu^{(k)}(\xs) > -\delta \right\} \bigcup  \left\{\xs \in X_c \setminus X_c^{(k)} : \nu^{(k)}(\xs) > 0 \right\}. 
        \label{Xc_up}
    \end{equation}
\end{enumerate}
\end{leftbox}
Thus, in the update $X_c^{(k)}$ step, we balance addressing points where constraints are being violated ($\nu^{(k)} > 0$) and discarding points where constraints are satisfied with sufficient strict inequality ($\nu^{(k)}\leq -\delta$). This prevents overfitting to any specific subset of the constraint points. A potential variation to the union above is to only add up to say $K$ constraint violating points from $X_c\setminus X_c^{(k)}$ (e.g., corresponding to the $K$ worst violators), where $K$ is a fixed positive integer. Indeed this is the variation used in our experiments and was found to be extremely efficient in balancing the size of the set $X_c^{(k)}$ and thus, the complexity of each iteration. This adaptive sampling technique is inspired by \emph{exchange algorithms} for semi-infinite optimization, as the one proposed in~\cite{ZhangWuEtAl2010} where one is trying to enforce the constraints at \emph{all} points in a compact set $\X$.

Note that after the first major iteration, we replace the regularization terms in $J_d$ and $J_m$ with $\|\alpha^{(k)} - \alpha^{(k-1)}\|_2^2$, $\|\beta_j^{(k)}-\beta_j^{(k-1)}\|_2^2$, and $\|\tilde{\theta}_{ij}^{(k)} - \tilde{\theta}_{ij}^{(k-1)}\|_2^2$. This is done to prevent large updates to the parameters, particularly due to the dynamically updating constraint set $X_c^{(k)}$. The full pseudocode is summarized below in Algorithm~\ref{alg:final}. 

\begin{algorithm}[h!]
  \caption{Stabilizable Non-Linear Dynamics Learning (SNDL)}
  \label{alg:final}
  \begin{algorithmic}[1]
  \State {\bf Input:} Dataset $\{\xs,\us,\dot{x}_i\}_{i=1}^{N}$, constraint set $X_c$, regularization constants $\{\mu_f,\mu_b,\mu_w\}$, constraint tolerances $\{\delta_\lambda,\epsilon_\lambda,\delta_{\wl},\epsilon_{\wl} \}$, discard tolerance parameter $\delta$, Initial \# of constraint points: $N_c^{(0)}$, Max \# iterations: $N_{\max}$, termination tolerance $\varepsilon$. 
   \State $k \leftarrow 0$, \texttt{converged} $\leftarrow$ \textbf{false}, $W(x) \leftarrow I_n$.
   \State $X_c^{(0)} \leftarrow \textproc{RandSample}(X_c,N_c^{(0)})$ \label{line:rand_samp_init}
   \While {$\neg \texttt{converged} \wedge k<N_{\max} $} 
    \State $\{\alpha^{(k)}, \beta_j^{(k)}, \lambda^{(k)} \} \leftarrow \textproc{Solve}$~\eqref{finite_dyn}
    \State $\{\tilde{\theta}_{ij}^{(k)},\wl,\wu\} \leftarrow \textproc{Solve}$~\eqref{finite_met}
    \State $X_c^{(k+1)}, \bar{s}^{(k)}, \nu^{(k)} \leftarrow$ \textproc{Update} $X_c^{(k)}$ using~\eqref{Xc_up}
    \State {\small $\Delta \leftarrow \max\left\{\|\alpha^{(k)}-\alpha^{(k-1)}\|_{\infty},\|\beta_j^{(k)}-\beta_j^{(k-1)}\|_{\infty},\|\tilde{\theta}_{ij}^{(k)}-\tilde{\theta}_{ij}^{(k-1)}\|_{\infty},\|\lambda^{(k)}-\lambda^{(k-1)}\|_{\infty} \right\}$}
    \If{$\Delta < \varepsilon$ \textbf{or} $\nu^{(k)}(\xs) < \varepsilon \quad \forall \xs \in X_c$}
        \State \texttt{converged} $\leftarrow$ \textbf{true}.
    \EndIf
    \State $k \leftarrow k + 1$.
  \EndWhile
      \end{algorithmic}
\end{algorithm} 

\revision{Some comments are in order. First, convergence in Algorithm~\ref{alg:final} is declared if either progress in the solution variables stalls or all constraints are satisfied within tolerance. Due to the semi-supervised nature of the algorithm in that the number of constraint points $N_c$ can be significantly larger than the number of supervisory regression tuples $N$, it is impractical to enforce constraints at all $N_c$ points in any one iteration. Two key consequences of this are: (i) the matrix function $W(x)$ at iteration $k$ resulting from variables $\tilde{\theta}^{(k)}$ does \emph{not} have to correspond to a valid dual CCM for the interim learned dynamics at iteration $k$, and (ii) convergence based on constraint satisfaction at all $N_c$ points is justified by the fact that at each iteration, we are solving relaxed sub-problems that collectively generate a sequence of lower-bounds on the overall objective. Potential future topics in this regard are: (i) investigate the properties of the converged dynamics for models that are a priori unknown unstabilizable, and (ii) derive sufficient conditions for convergence for both the infinitely- and finitely- constrained versions of problem~\eqref{prob_gen2}.

Second, as a consequence of this iterative procedure, the dual metric and contraction rate pair $\{W(x),\lambda\}$ do not possess any sort of ``control-theoretic'' optimality. For instance, in~\cite{SinghMajumdarEtAl2017}, for a known stabilizable dynamics model, both these quantities are optimized for robust control performance. In this work, these quantities are used solely as \emph{regularizers} to \emph{promote} stabilizability of the learned model. A potential future topic to explore in this regard is how to further optimize $\{W(x),\lambda\}$ for control \emph{performance} for the final learned dynamics.}

%===============================================================================
\vspace{-3mm}
\section{Experimental Results} \label{sec:result}
\vspace{-2mm}

In this section we validate our algorithms by benchmarking our results on a known dynamics model. Specifically, we consider the 6-state planar vertical-takeoff-vertical-landing (PVTOL) model. The system is defined by the state: $(p_x,p_z,\phi,v_x,v_z,\dot{\phi})$ where $(p_x,p_z)$ is the position in the 2D plane, $(v_x,v_z)$ is the body-reference velocity, $(\phi,\dot{\phi})$ are the roll and angular rate respectively, and 2-dimensional control input $u$ corresponding to the motor thrusts. The true dynamics are given by:
\[
    \dot{x}(t) = \begin{bmatrix} v_x \cos\phi - v_z \sin\phi \\ v_x\sin\phi + v_z\cos\phi \\ \dot{\phi} \\ v_z\dot{\phi} - g\sin\phi \\ -v_x\dot{\phi} - g\cos\phi \\ 0 \end{bmatrix} + \begin{bmatrix} 0&0\\0&0 \\0&0 \\0&0 \\ (1/m) &(1/m) \\ l/J & (-l/J) \end{bmatrix}u,
\]
where $g$ is the acceleration due to gravity, $m$ is the mass, $l$ is the moment-arm of the thrusters, and $J$ is the moment of inertia about the roll axis. 
%\begin{figure}[h]
%	\centering
%	\includegraphics[width=0.35\textwidth]{pvtol.png}
%	\caption{Definition of the PVTOL state variables and model parameters: $l$ denotes the thrust moment arm (symmetric).}
%	\label{fig:PVTOL}
%\end{figure} 
We note that typical benchmarks in this area of work either present results on the 2D LASA handwriting dataset~\cite{Khansari-ZadehBillard2011} or other low-dimensional motion primitive spaces, with the assumption of full robot dynamics invertibility. The planar quadrotor on the other hand is a complex non-minimum phase dynamical system that has been heavily featured within the acrobatic robotics literature and therefore serves as a suitable case-study. 

%Due to space constraints, we provide details of our implementation in the appendix. In summary, the training data was generated by fitting randomly sampled geometric paths with polynomial spline trajectories that were then tracked with a sub-optimal PD controller to emulate a noisy/imperfect demonstrator. We solved problem~\eqref{prob_gen2} using Algorithm~\ref{alg:final} and leveraging a matrix feature mapping derived from RKHS theory. The algorithm converged in 5 major iterations, and leveraged a constraint set size $N_c$ of at most 344 points for any of the major iterations. Compared with the $N=1814$ points in the training dataset, this was a substantial computational gain. 

\vspace{-2mm}
\subsection{Generation of Datasets} \label{sec:data_gen}

The training dataset was generated in 3 steps. First, a fixed set of waypoint paths in $(p_x,p_z)$ were randomly generated. Second, for each waypoint path, multiple smooth polynomial splines were fitted using a minimum-snap algorithm. To create variation amongst the splines, the waypoints were perturbed within Gaussian balls and the time durations for the polynomial segments were also randomly perturbed. Third, the PVTOL system was simulated with perturbed initial conditions and the polynomial trajectories as references, and tracked using a sub-optimally tuned PD controller; thereby emulating a noisy/imperfect demonstrator. These final simulated paths were sub-sampled at $0.1$s resolution to create the datasets. The variations created at each step of this process were sufficient to generate a rich exploration of the state-space for training.

Due to space constraints, we provide details of the solution parameterization (number
of features, etc) in Appendix~\ref{app:prob_params}.
\vspace{-2mm}
\subsection{Models}
Using the same feature space, we trained three separate models with varying training dataset (i.e., $(\xs,\us,\dot{x}_s)$ tuples) sizes of $N \in \{100, 250, 500, 1000\}$. \revision{The first model, {\bf N-R} was an unconstrained and un-regularized model, trained by solving problem~\eqref{finite_dyn} without constraints or $l_2$ regularization (i.e., just least-squares).} The second model, {\bf R-R} was an unconstrained ridge-regression model, trained by solving problem~\eqref{finite_dyn} without any constraints (i.e., least-squares plus $l_2$ regularization). The third model, {\bf CCM-R} is the CCM-regularized model, trained using Algorithm~\ref{alg:final}. \revision{We enforced the CCM regularizing constraints for the CCM-R model at $N_c = 2400$ points in the state-space, composed of the $N$ demonstration points in the training dataset and randomly sampled points from $\X$ (recall that the CCM constraints do not require samples of $u,\dot{x}$). }

\revision{As the CCM constraints were relaxed to hold pointwise on the finite constraint set $X_c$ as opposed to everywhere on $\X$, in the spirit of viewing these constraints as regularizers for the model (see Section~\ref{sec:reg}), we simulated both the R-R and CCM-R models using the time-varying Linear-Quadratic-Regulator (TV-LQR) feedback controller.} This also helped ensure a more direct comparison of the quality of the learned models themselves, independently of the tracking feedback controller. \revision{The results are virtually identical using a tracking MPC controller and yield no additional insight.}
\vspace{-2mm}
\subsection{Validation and Comparison}\label{sec:verify}

The validation tests were conducted by gridding the $(p_x,p_z)$ plane to create a set of 120 initial conditions between 4m and 12m away from $(0,0)$ and randomly sampling the other states for the rest of the initial conditions. These conditions were \emph{held fixed} for both models and for all training dataset sizes to evaluate model improvement.

\revision{For each model at each value of $N$}, the evaluation task was to (i) solve a trajectory optimization problem to compute a dynamically feasible trajectory for the learned model to go from initial state $x_0$ to the goal state - a stable hover at $(0,0)$ at near-zero velocity; and (ii) track this trajectory using the TV-LQR controller. As a baseline, all simulations without \revision{any feedback controller (i.e., open-loop control rollouts) led to the PVTOL crashing}. This is understandable since the dynamics fitting objective is not optimizing for \emph{multi-step} error. \revision{The trajectory optimization step was solved as a fixed-endpoint, fixed final time optimal control problem using the Chebyshev pseudospectral method~\cite{FahrooRoss2002} with the objective of minimizing $\int_{0}^T \|u(t)\|^2 dt$. The final time $T$ for a given initial condition was held fixed between all models. Note that 120 trajectory optimization problems were solved for each model and each value of $N$.}

Figure~\ref{fig:box_all} shows a boxplot comparison of the trajectory-wise RMS full state errors ($\|x(t)-x^*(t)\|_2$ where $x^*(t)$ is the reference trajectory obtained from the optimizer and $x(t)$ is the actual realized trajectory) for each model and all training dataset sizes. 
\begin{figure}[h]
    \centering
    \includegraphics[width=\textwidth,clip]{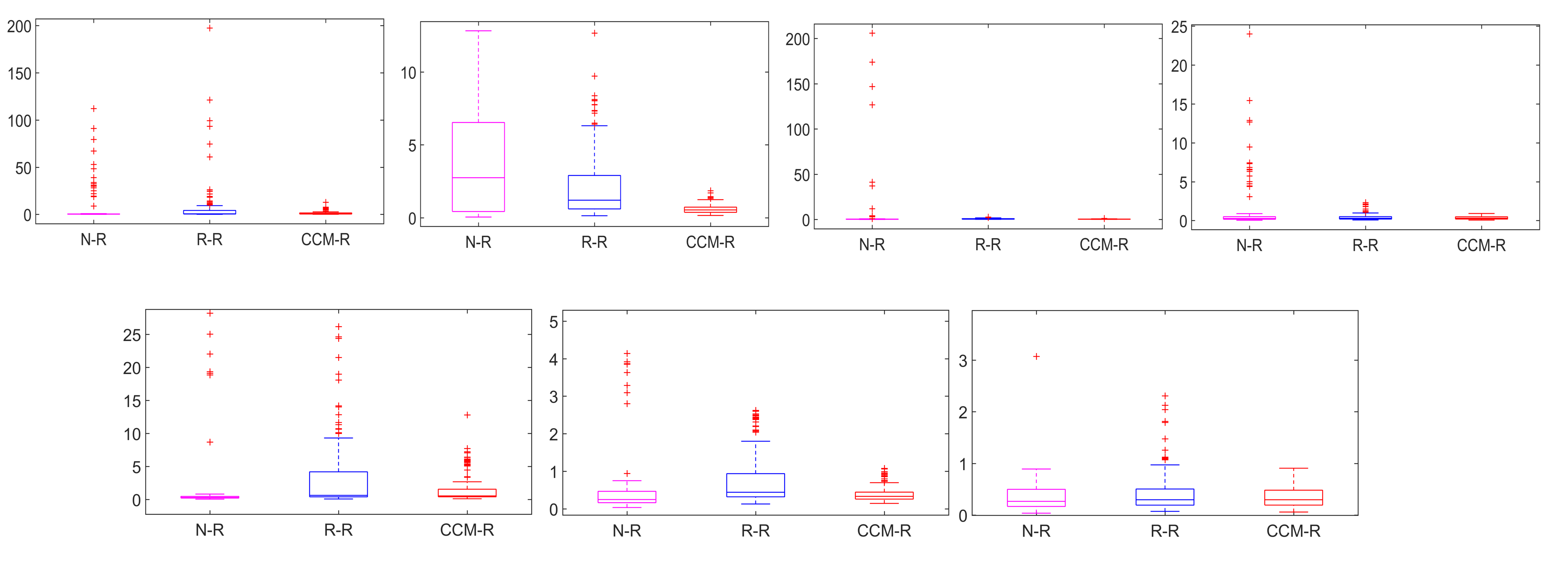}
    \caption{Box-whisker plot comparison of trajectory-wise RMS state-tracking errors over all 120 trajectories for each model and all training dataset sizes. \emph{Top row, left-to-right:} $N=100, 250, 500, 1000$; \emph{Bottom row, left-to-right:} $N=100, 500, 1000$ (zoomed in). The box edges correspond to the $25$th, median, and $75$th percentiles; the whiskers extend beyond the box for an additional 1.5 times the interquartile range; outliers, classified as trajectories with RMS errors past this range, are marked with red crosses. Notice the presence of unstable trajectories for N-R at all values of $N$ and for R-R at $N=100, 250$. The CCM-R model dominates the other two \emph{at all values of $N$}, particularly for $N = 100, 250$. }
        \label{fig:box_all}
\end{figure}
\revision{
As $N$ increases, the spread of the RMS errors decreases for both R-R and CCM-R models as expected. However, we see that the N-R model generates \emph{several} unstable trajectories for $N=100, 500$ and $1000$, indicating the need for \emph{some} form of regularization. The CCM-R model consistently achieves a lower RMS error distribution than both the N-R and R-R models \emph{for all training dataset sizes}. Most notable however, is its performance when the number of training samples is small (i.e., $N \in \{100, 250\}$) when there is considerable risk of overfitting. It appears the CCM constraints have a notable effect on the \emph{stabilizability} of the resulting model trajectories (recall that the initial conditions of the trajectories and the tracking controllers are held fixed between the models). 

For $N=100$ (which is really at the extreme lower limit of necessary number of samples since there are effectively $97$ features for each dimension of the dynamics function), both N-R and R-R models generate a large number of unstable trajectories. In contrast, out of the 120 generated test trajectories, the CCM-R model generates \emph{one} mildly (in that the quadrotor diverged from the nominal trajectory but did not crash) unstable trajectory. No instabilities were observed with CCM-R for $N \in \{250, 500, 1000\}$. 

Figure~\ref{fig:traj_100_uncon} compares the $(p_x,p_z)$ traces between R-R and CCM-R corresponding to the five worst performing trajectories for the R-R $N=100$ model. Similarly, Figure~\ref{fig:traj_100_CCM} compares the $(p_x,p_z)$ traces corresponding to the five worst performing trajectories for the CCM-R $N=100$ model. Notice the large number of unstable trajectories generated using the R-R model. Indeed, it is in this low sample training regime where the control-theoretic regularization effects of the CCM-R model are most noticeable. 
%The $(p_x,p_z)$ trajectory comparisons for $N=250$ are presented in Figure~\ref{fig:traj_250}. Specifically, on the left, we show the nominal (dashed) reference trajectories versus the actual realized (solid) trajectories for a subset of the initial conditions for the $N=250$ R-R model. The figure on the right shows the corresponding plot for the CCM-R model. Notice how not only is the tracking poor for the R-R model, but the nominal trajectories generated by the optimizer are quite jagged and unnatural for the true vehicle. In comparison, the CCM-R model does a significantly better job at both tasks, \emph{despite the low number of training samples.}
}
\begin{figure}[h]
	\centering
	\begin{subfigure}[t]{0.8\textwidth}
		\centering
		\includegraphics[width=\textwidth,clip]{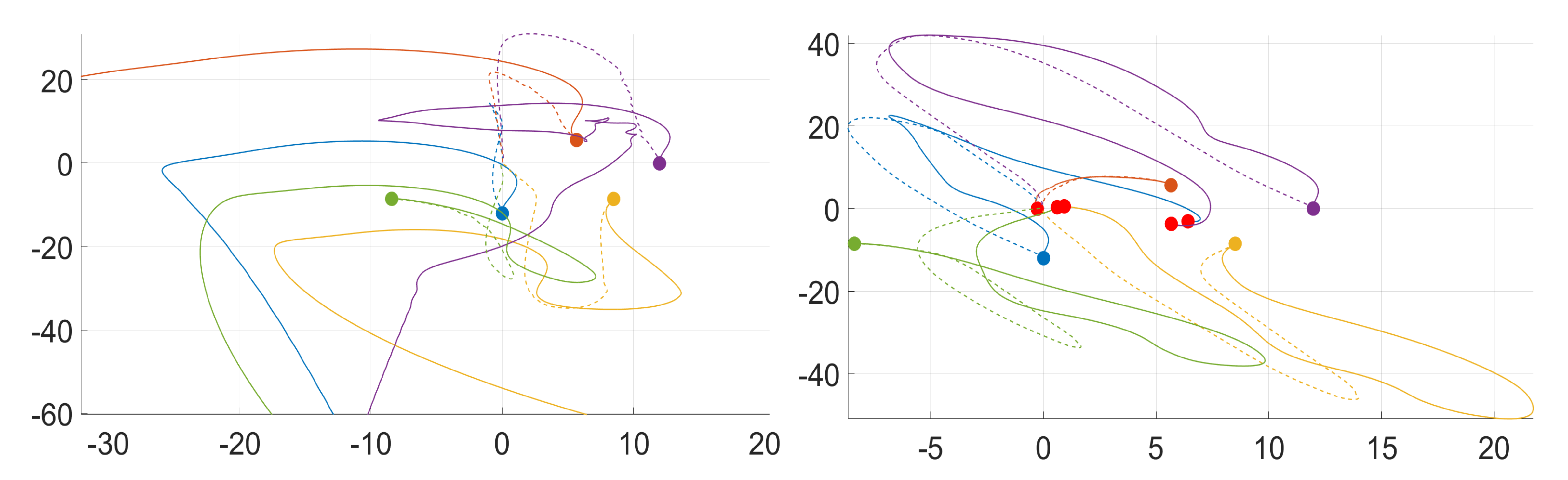}
		\caption{}
		\label{fig:traj_100_uncon}
	\end{subfigure} \qquad
	\begin{subfigure}[t]{0.8\textwidth}
		\centering
		\includegraphics[width=\textwidth,clip]{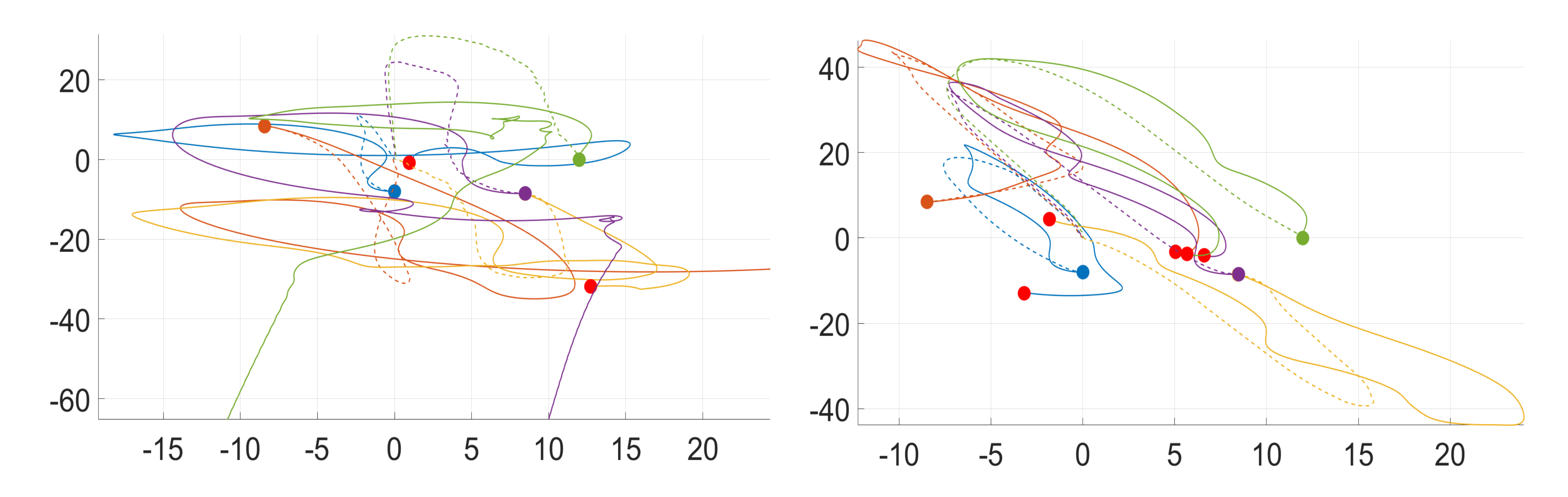}
		\caption{}
		\label{fig:traj_100_CCM}
	\end{subfigure}	
    \caption{ $(p_x,p_z)$ traces for R-R (\emph{left column}) and CCM-R (\emph{right column}) corresponding to the 5 worst performing trajectories for (a) R-R, and (b) CCM-R models at $N=100$. Colored circles indicate start of trajectory. Red circles indicate end of trajectory. All except one of the R-R trajectories are unstable. One trajectory for CCM-R is slightly unstable.}
        \label{fig:traj_250}
\end{figure}

Finally, in Figure~\ref{fig:unstable}, we highlight two trajectories, starting from the \emph{same initial conditions}, one generated and tracked using the R-R model, the other using the CCM model, for \revision{$N=250$}. Overlaid on the plot are the snapshots of the vehicle outline itself, illustrating the quite aggressive flight-regime of the trajectories \revision{(the initial starting bank angle is $40^\mathrm{o}$)}. While tracking the R-R model generated trajectory eventually ends in \revision{complete loss of control}, the system successfully tracks the CCM-R model generated trajectory to the stable hover at $(0,0$).

\begin{figure}[h]
    \centering
    \includegraphics[width=0.9\textwidth,clip]{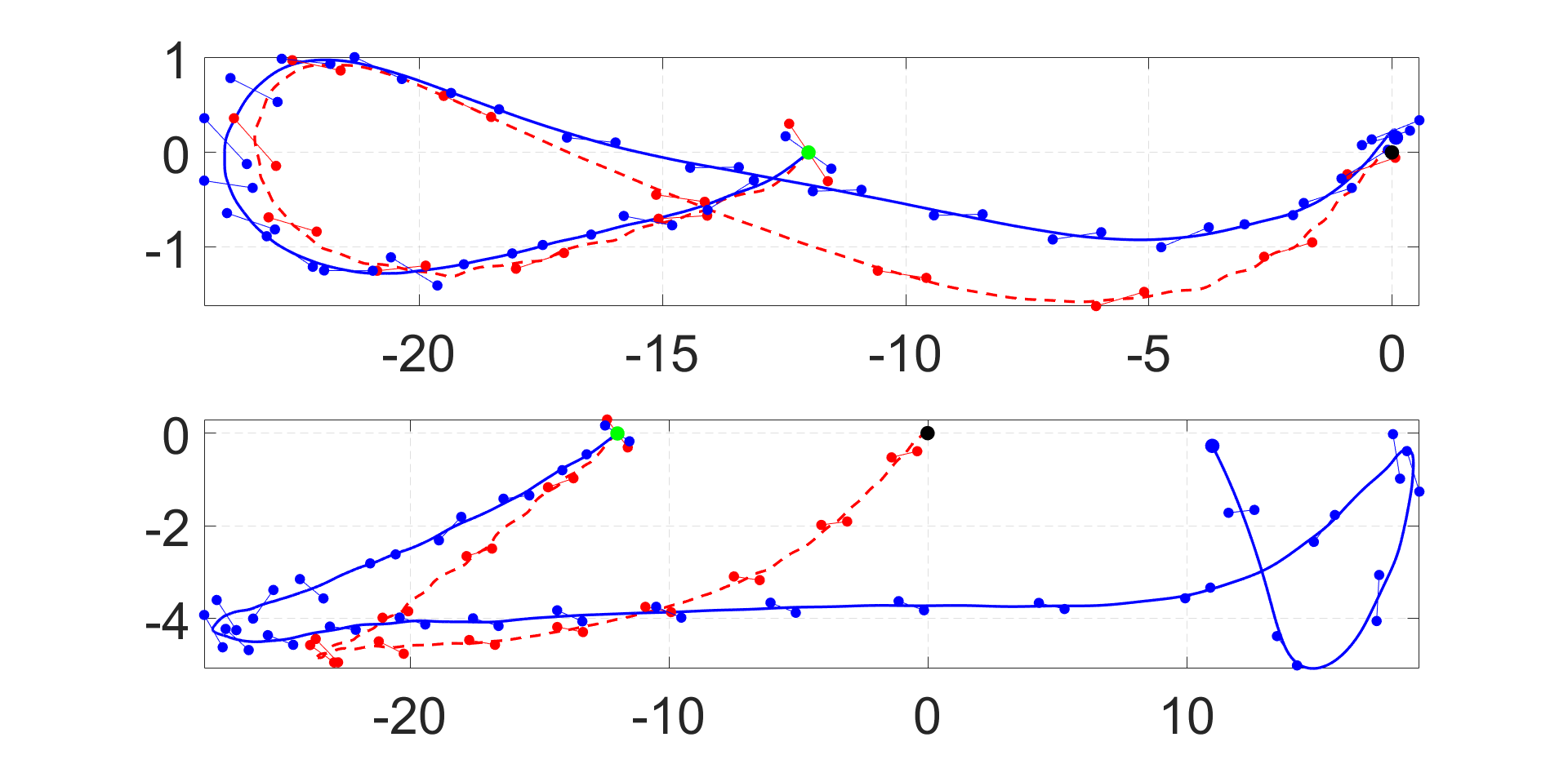}
    \caption{Comparison of reference and tracked trajectories in the $(p_x,p_z)$ plane for R-R and CCM-R models starting at same initial conditions with $N=250$. Red (dashed): nominal, Blue (solid): actual, Green dot: start, black dot: nominal endpoint, blue dot: actual endpoint; \emph{Top:} CCM-R, \emph{Bottom:} R-R. The vehicle successfully tracks the CCM-R model generated trajectory to the stable hover at $(0,0)$ while losing control when attempting to track the R-R model generated trajectory.}
        \label{fig:unstable}
\end{figure}

\revision{
An interesting area of future work here is to investigate how to tune the regularization parameters $\mu_f, \mu_b, \mu_w$. Indeed, the R-R model appears to be extremely sensitive to $\mu_f$, yielding drastically worse results with a small change in this parameter. On the other hand, the CCM-R model appears to be quite robust to variations in this parameter. Standard cross-validation techniques using regression quality as a metric are unsuitable as a tuning technique here; indeed, recent results even advocate for ``ridgeless'' regression~\cite{LiangRakhlin2018}. However, as observed in Figure~\ref{fig:box_all}, un-regularized model fitting is clearly unsuitable. The effect of regularization on how the trajectory optimizer leverages the learned dynamics is a non-trivial relationship that merits further study.}

\section{Conclusions}
In this paper, we presented a framework for learning \emph{controlled} dynamics from demonstrations for the purpose of trajectory optimization and control for continuous robotic tasks. By leveraging tools from nonlinear control theory, chiefly, contraction theory, we introduced the concept of learning \emph{stabilizable} dynamics, a notion which guarantees the existence of feedback controllers for the learned dynamics model that ensures trajectory trackability. 
Borrowing tools from  Reproducing Kernel Hilbert Spaces and convex optimization, we proposed a bi-convex semi-supervised algorithm for learning stabilizable dynamics for complex underactuated and inherently unstable systems. The algorithm was validated on a simulated planar quadrotor system where it was observed that our control-theoretic dynamics learning algorithm notably outperformed traditional ridge-regression based model learning.

There are several interesting avenues for future work. First, it is unclear how the algorithm would perform for systems that are fundamentally unstabilizable and how the resulting learned dynamics could be used for ``approximate'' control. Second, we will explore sufficient conditions for convergence for the iterative algorithm under the finite- and infinite-constrained formulations. Third, we will address extending the algorithm to work on higher-dimensional spaces through functional parameterization of the control-theoretic regularizing constraints. Fourth, we will address the limitations imposed by the sparsity assumption on the input matrix $B$ using the proposed alternating algorithm proposed in Section~\ref{sec:B_simp}. Finally, we will incorporate data gathered on a physical system subject to noise and other difficult to capture nonlinear effects (e.g., drag, friction, backlash) and validate the resulting dynamics model and tracking controllers on the system itself to evaluate the robustness of the learned models.

% The acknowledgments are autatically included only in the final version of the paper.
%\acknowledgments{If a paper is accepted, the final camera-ready version will (and probably should) include acknowledgments. All acknowledgments go at the end of the paper, including thanks to reviewers who gave useful comments, to colleagues who contributed to the ideas, and to funding agencies and corporate sponsors that provided financial support.}

%===============================================================================
\vspace{-3mm}
\renewcommand{\baselinestretch}{0.85}
\bibliographystyle{splncs03}
%\bibliography{../../../bib/main,../../../bib/ASL_papers} 
\newcommand{\noopsort}[1]{} \newcommand{\printfirst}[2]{#1}
  \newcommand{\singleletter}[1]{#1} \newcommand{\switchargs}[2]{#2#1}

\newpage
\appendix
\renewcommand{\baselinestretch}{0.91}
\section*{Appendix}
\section{Justification for Sparsity Assumption for $\hat{B}$}
\label{app:justify_B}

Physically, structural assumption~\eqref{B_simp} is not as mysterious as it appears. Indeed, consider the standard dynamics form for mechanical systems:
\[
	H(q) \ddot{q} + C(q,\dot{q}) \dot{q} + g(q) = B(q) u,
\]
where $q \in \reals^{n_q}$ is the configuration vector, $H \in \Sjpp_{n_q}$ is the inertia matrix, $C(q,\dot{q})\dot{q}$ contains the centrifugal and Coriolis terms, $g$ are the gravitational terms, $B \in \reals^{n_q \times m}$ is the (full rank) input matrix mapping and $u \in \reals^{m}$ is the input. For fully actuated systems, $\mathrm{rank}(B) = m = n_q$. For underactuated systems, $m < n_q$. By rearranging the configuration vector~\cite{Spong1998,Olfati-Saber2001,ReyhanogluSchaftEtAl1999}, one can partition $q$ as $(q_u, q_a)$ where $q_u \in \reals^{n_q - m}$ represents the unactuated degrees of freedom and $q_a \in \reals^{m}$ represents the actuated degrees of freedom. Applying this partitioning to the dynamics equation above yields
\[
	\begin{bmatrix} H_{uu} (q) & H_{ua}(q) \\ H_{ua}(q) & H_{aa} (q)\end{bmatrix} \begin{bmatrix} \ddot{q}_u \\ \ddot{q}_a \end{bmatrix} + \begin{bmatrix} \tau_u(q,\dot{q}) \\ \tau_a (q,\dot{q}) \end{bmatrix} = \begin{bmatrix} O_{(n_q-m)\times m} \\ b(q) \end{bmatrix} u
\]
where $b \in \reals^{m \times m}$ is an invertible square matrix. As observed in~\cite{ReyhanogluSchaftEtAl1999}, a substantial class of underactuated systems can be represented in this manner. Of course, fully actuated systems also take this form ($m = n_q$). Thus, by taking as state $x = (q, p) \in \reals^{n}$ where $p = H(q) \dot{q}$ is momentum (so that $n = 2n_q$), the dynamics can be written as~\eqref{dyn}:
\begin{equation}
	\dot{x} = \begin{bmatrix} \dot{q} \\ \dot{p} \end{bmatrix} = \begin{bmatrix} H^{-1} (q) p \\ \dot{H}(q) \dot{q} - \tau(q,\dot{q}) \end{bmatrix}  + \begin{bmatrix} O_{(n-m) \times m} \\ b(q) \end{bmatrix}u.
\label{under_a}
\end{equation}
Notice that the input matrix takes the desired normal form in~\eqref{B_simp}. To address the apparent difficult of working with the state representation of $(q,p)$ (when usually only measurements of $(q,\dot{q},\ddot{q})$ are typically available \emph{and} the inertia matrix $H(q)$ is unknown), we make use of the following result from~\cite{ManchesterSlotine2017}:

\begin{theorem}[CCM Invariance to Diffeomorphisms]\label{thm:ccm_inv}
Suppose there exists a valid CCM $M_x(x)$ with respect to the state $x$. Then, if $z = \psi(x)$ is a diffeomorphism, then, the CCM conditions also hold with respect to state $z$ with metric $M_z(z) = \Psi(z)^{-T}M_x(x)\Psi(z)^{-1}$, where $\Psi(z) = \partial \psi(x)/\partial x$ evaluated at $x = \psi^{-1}(z)$.
\end{theorem}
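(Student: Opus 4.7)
The plan is to exploit the fact that the scalar differential Lyapunov function $V(x,\delta_x) = \delta_x^T M_x(x)\delta_x$ is, by construction of $M_z$, coordinate-invariant under the diffeomorphism $z = \psi(x)$. Since tangent vectors transform as $\delta_z = \Psi(z)\delta_x$, the definition $M_z(z) = \Psi(z)^{-T} M_x(x) \Psi(z)^{-1}$ yields $\delta_z^T M_z(z)\delta_z = \delta_x^T \Psi^T \Psi^{-T} M_x \Psi^{-1} \Psi \delta_x = \delta_x^T M_x(x)\delta_x$, so $V$ is the same scalar in either coordinate system. Positive-definiteness of $M_z$ is immediate from that of $M_x$ together with invertibility of $\Psi$.

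Next, I would derive the transformed controlled dynamics. From $\dot{z} = \Psi(z)\dot{x}$, the system in $z$-coordinates reads $\dot{z} = \tilde{f}(z) + \tilde{B}(z)u$ with $\tilde{f}(z) = \Psi(z) f(\psi^{-1}(z))$ and $\tilde{B}(z) = \Psi(z) B(\psi^{-1}(z))$, while the input $u$ and its virtual displacement $\delta_u \in \reals^m$ do \emph{not} transform. Differentiating $\delta_z = \Psi \delta_x$ along the flow via the chain rule gives the variational form $\dot{\delta}_z = \tilde{A}(z,u)\delta_z + \tilde{B}(z)\delta_u$ for a corresponding $\tilde{A}$ analogous to eq.~\eqref{var_dyn_c}.

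With coordinate-invariance of $V$ along common trajectories, $\dot V$ computed along the transformed variational dynamics must equal $\dot V$ computed along the original one. Therefore, given the differential controller $\delta_u(x,\delta_x,u)$ that certifies $\dot{V} \leq -2\lambda V$ in $x$-coordinates, defining the transformed controller $\tilde{\delta}_u(z,\delta_z,u) := \delta_u(\psi^{-1}(z),\, \Psi(z)^{-1}\delta_z,\, u)$ produces precisely the same numerical value of $\dot V$, preserving the inequality $\dot{V} \leq -2\lambda V$ pointwise for all $(z,\delta_z) \in \mathcal{T}\X$ and $u \in \reals^m$. This is exactly the CCM condition in eq.~\eqref{V_dot} now expressed in the new state representation.

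Finally, to conclude that conditions \eqref{killing_A} and \eqref{nat_contraction_W} on the dual $W_z = M_z^{-1} = \Psi W_x \Psi^T$ hold under the transformed $\{\tilde f, \tilde B\}$, I would invoke the equivalence result of \cite{ManchesterSlotine2017} stating that the pair of LMIs~\eqref{killing_A}--\eqref{nat_contraction_W} is necessary and sufficient for the existence of a $\delta_u$ realizing $\dot{V} \leq -2\lambda V$; since we have exhibited such a $\tilde{\delta}_u$, the two LMIs must hold in $z$-coordinates. As a cross-check, one can directly verify the annihilator condition with $\tilde{B}_\perp(z) = \Psi(z)^{-T} B_\perp(x)$, which satisfies $\tilde{B}^T \tilde{B}_\perp = B^T \Psi^T \Psi^{-T} B_\perp = B^T B_\perp = 0$. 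The main obstacle in a direct approach would be the bookkeeping needed to transform the Lie derivatives $\partial_f W$ and $\partial b_j/\partial x$ under $\psi$; the indirect route via the intrinsic $V$-inequality sidesteps this entirely, which is why I would take it.
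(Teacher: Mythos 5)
The paper itself offers no proof of Theorem~\ref{thm:ccm_inv}; it imports the result verbatim from \cite{ManchesterSlotine2017}, so there is no in-paper argument to compare against. Your proof is a sound, self-contained route: the observation that $V(x,\delta_x)=\delta_x^T M_x(x)\delta_x$ is a coordinate-invariant scalar under $\delta_z=\Psi\delta_x$, $M_z=\Psi^{-T}M_x\Psi^{-1}$ is exactly the right intrinsic viewpoint, and closing the loop through the stated necessity-and-sufficiency of the LMI pair \eqref{killing_A}--\eqref{nat_contraction_W} (rather than transforming the Lie derivatives by hand) is a legitimate and clean shortcut. The annihilator cross-check $\tilde{B}_\perp=\Psi^{-T}B_\perp$ is also correct.

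One step deserves more than the single sentence you give it: you compute $\dot{\delta}_z=\bigl(\dot{\Psi}\Psi^{-1}+\Psi A\Psi^{-1}\bigr)\delta_z+\Psi B\,\delta_u$ and call the bracketed matrix ``a corresponding $\tilde{A}$ analogous to eq.~\eqref{var_dyn_c},'' but the theorem requires that this matrix actually \emph{equal} $\partial\tilde{f}/\partial z+\sum_j u^j\,\partial\tilde{b}_j/\partial z$ for the transformed vector fields $\tilde{f}=\Psi f$, $\tilde{b}_j=\Psi b_j$ --- otherwise you have only stabilized a pushed-forward variational system, not exhibited a CCM for the system $\{\tilde{f},\tilde{B}\}$. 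The identity does hold, because the $k$-th column of $\partial(\Psi g)/\partial z$ is $(\partial\Psi/\partial z^k)g+\Psi(\partial g/\partial x)\Psi^{-1}e_k$ and the first terms reassemble into $\dot{\Psi}\Psi^{-1}\delta_z$ by symmetry of the second partials of $\psi$ (so you need $\psi\in C^2$); this is the standard fact that prolongation commutes with coordinate changes, but it is the only place where an actual computation is required and it should be stated. A second, minor point: uniform positive definiteness of $M_z$ (not just pointwise) needs $\Psi$ and $\Psi^{-1}$ uniformly bounded on $\X$, which holds on the bounded region considered here provided the diffeomorphism and its inverse have continuous derivatives on the closure. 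Neither issue undermines the argument.
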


Thus, for the (substantial) class of underactuated systems of the form~\eqref{under_a}, one would solve problem~\eqref{prob_gen2} by alternating between fitting $H, \tau, b$ (using the demonstrations $(q,\dot{q},\ddot{q})$) and leveraging Theorem~\ref{thm:ccm_inv} and the previous estimate of $H$ to enforce the matrix inequality constraints using the state representation $(q,p)$. This allows us to borrow several existing results from adaptive control on estimating mechanical system by leveraging the known linearity of $H(q)$ in terms of unknown mass property parameters multiplying known physical basis functions. We leave this extension however, to future work.

\section{Derivation of Problem~\eqref{learn_finite}}\label{sec:deriv}

To go from the general problem definition in~\eqref{prob_gen2} to the finite dimensional problem in~\eqref{learn_finite}, we first must define appropriate function classes for $f$, $b_j$, and $W$. We will do this using the framework of RKHS.

\subsection{Reproducing Kernel Hilbert Spaces}

{\bf Scalar-valued RKHS}: Kernel methods~~\cite{ScholkoepfSmola2001} constitute a broad family of non-parametric modeling techniques for solving a range of problems in machine learning. A scalar-valued positive definite kernel function $k : \X \times \X \mapsto \reals$ generates a Reproducing Kernel Hilbert Space (RKHS) of functions, with the nice property that if two functions are close in the distance derived from the norm (associated with the Hilbert space), then their pointwise evaluations are close at all points. This continuity of evaluation functionals has the far reaching consequence that norm-regularized learning problems over RKHSs admit finite dimensional solutions via Representer Theorems. The kernel $k$ may be (non-uniquely) associated with a higher-dimensional embedding of the input space via a feature map, $\phi: \reals^n \mapsto \reals^D$, such that $k(x, z) = \langle \phi(x), \phi(z)\rangle$, where $D$ is infinite for universal kernels associated with RKHSs that are dense in the space of square integrable functions. Standard regularized linear statistical models in the embedding space implicitly provide non-linear inference with respect to the original input representation.  In a nutshell, kernel methods provide a rigorous algorithmic framework for deriving non-linear counterparts of a whole array of linear statistical techniques, e.g. for classification, regression, dimensionality reduction and unsupervised learning. For details, we point the reader to~\cite{HearstDumaisEtAl1998}.

%\ssmargin{[insert paragraph]}{sumeet}

{\bf Vector-valued RKHS}: Dynamics estimation is a vector-valued learning problem. Such problems can be naturally formulated in terms of  vector-valued generalizations of RKHS conceps. The theory and formalism of vector-valued RKHS can be traced as far back as the work of Laurent Schwarz in 1964, with applications ranging from solving partial differential equations to machine learning. Informally, we say that that ${\cal H}$ is an RKHS of $\reals^n$-valued maps if for any $v\in \reals^n$, the linear functional that maps $f \in {\cal H}$ to $v^T f(x)$ is continuous.

More formally, denote the standard inner product on $\Y$ as $\ip{\cdot}{\cdot}$ and let $\Y(\X)$ be the vector space of all functions $f : \X \rightarrow \Y$ and let $\Lin(\Y)$ be the space of all bounded linear operators on $\Y$, i.e., $n\times n$ matrices. A function $K : \X \times \X \rightarrow \Lin(\Y)$ is an \emph{operator-valued positive definite kernel} if for all $(x,z) \in \X \times \X$, $K(x,z)^T = K(z,x)$ and for all finite set of points $\{x_i\}_{i=1}^{N} \in \X$ and $\{y_i\}_{i=1}^{N} \in \Y$, 
\[
	\sum_{i,j=1}^{N} \ip{y_i}{K(x_i,x_j)y_j} \geq 0.
\]
Given such a $K$, we define the unique $\Y$-valued RKHS $\Hk \subset \Y(\X)$ with reproducing kernel $K$ as follows. For each $x \in \X$ and $y\in \Y$, define the function $K_{x} y = K(\cdot, x) y \in \Y(\X)$. That is, for all $z \in \X$, $K_x y (z) = K(z,x) y$. Then, the Hilbert space $\Hk$ is defined to be the completion of the linear span of functions $\{K_x y \ | \ x \in \X, y \in \Y\}$ with inner product between functions $f  = \sum_{i=1}^{N} K_{x_i} y_i,\ g = \sum_{j=1}^{M} K_{z_j} w_i \in \Hk$, defined as:
\[
	\ip{f}{g}_{\Hk} = \sum_{i=1}^{N}\sum_{j=1}^{M} \ip{y_i}{K(x_i,z_j) w_j}.
\] 
Notably, the kernel $K$ satisfies the following reproducing property:
\begin{equation}
	\ip{f(x)}{y} = \ip{f}{K_x y}_{\Hk} \quad \forall (x,y) \in \X \times \Y \wedge f \in \Hk,
\label{RK_rep}
\end{equation}
and is thereby referred to as the \emph{reproducing kernel} for $\Hk$. As our learning problems involves the Jacobian of $f$, we will also require a representation for the derivative of the kernel matrix. Accordingly, suppose the kernel matrix $K$ lies in $C^{2}(\X \times \X)$. For any $j \in \{1,\ldots, n\}$, define the matrix functions $\frac{\partial}{\partial s^j} K: \X \times \X \rightarrow \Lin(\Y)$ and $\frac{\partial}{\partial r^j} K : \X \times \X \rightarrow \Lin(\Y)$ as
\[
	\dfrac{\partial}{\partial s^j} K(z,x) := \left.\dfrac{\partial}{\partial s^j} K(r , s) \right|_{r = z, s = x}  , \quad \dfrac{\partial}{\partial r^j} K(z,x) := \left. \dfrac{\partial}{\partial r^j} K(r,s) \right|_{r = z,s=x},
\]
where the derivative of the kernel matrix is understood to be element-wise. 
%By the adjoint property of $K$, we have the identity
% \[
% 	\dfrac{\partial}{\partial r^j} K(z,x) = \dfrac{\partial}{\partial s^j} K^T(x,z).
% \]
Define the $C^1$ function $\partial_j K_x y$ on $\Y(\X)$ as
\[
	\partial_j K_x y(z) := \dfrac{\partial}{\partial s^j} K(z, x)y \quad \forall z \in \X.
\]
% Similarly, for any $i,j \in \{1,\ldots,n\}$ define the matrix function $\frac{\partial^2}{\partial r^i \partial s^j} K : \X\times \X \rightarrow \Lin(\Y)$:
% \[
% 	\dfrac{\partial^2}{\partial r^i \partial s^j} K (z,x) := \left.\dfrac{\partial^2}{\partial r^i \partial s^j} K(r,s) \right| _{r=z,s=x},
% \]
% and the $C^0$ function $\partial^2_{ij} K_x y$ on $\Y^\X$:
% \[
% 	\partial^2_{ij} K_x y(z) := \dfrac{\partial} {\partial z^i} (\partial_j K_x y ) (z) = \dfrac{\partial^2}{\partial r^i \partial s^j} K (z ,x) y	\quad \forall z \in \X.
% \]
The following result provides a useful reproducing property for the derivatives of functions in $\Hk$ that will be instrumental in deriving the solution algorithm.

\begin{theorem}[Derivative Properties on $\Hk$ \cite{MicheliGlaunes2014}] \label{thm:RK_der}
Let $\Hk$ be a RKHS in $\Y(\X)$ with reproducing kernel $K \in C^2(\X \times \X)$. Then, for all $(x,y) \in \X \times \Y$:
\begin{enumerate}
	\item[(a)] $\partial_j K_x y \in \Hk$ for all $j = 1,\ldots,n$. 
	\item[(b)] The following derivative reproducing property holds for all $j = 1,\ldots, n$:
	\[
		\ip{\dfrac{\partial f(x)}{\partial x^j}}{y} = \ip{f}{ \partial_j K_x y}_{\Hk}, \quad \forall f \in \Hk.
	\]
\end{enumerate}
\end{theorem}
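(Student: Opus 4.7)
The plan is to establish (a) by exhibiting $\partial_j K_x y$ as the $\Hk$-norm limit of an explicit difference quotient of reproducing elements, then deduce (b) from one more limiting argument. Concretely, for $t \neq 0$ I would set
\[
h_t := \tfrac{1}{t}\bigl(K_{x+t e_j} y - K_x y\bigr) \in \Hk,
\]
where $e_j$ is the $j$-th standard basis vector in $\reals^n$. Each $h_t$ lies in $\Hk$ as a finite linear combination of reproducing elements, and pointwise $h_t(z) \to (\partial_j K_x y)(z)$ at every $z \in \X$ by the definition of $\partial_j K_x y$ and differentiability of $K$ in its second argument.

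The key technical step is to show that $\{h_t\}$ is Cauchy in the $\Hk$-norm as $t\to 0$. Expanding $\|h_t - h_s\|_{\Hk}^2$ and invoking $\ip{K_a y}{K_b y}_{\Hk} = \ip{y}{K(b,a)y}$ reduces each of the three resulting inner products to a rescaled second-order mixed finite difference of $K$ at $(x,x)$ in its two arguments. A second-order Taylor expansion, permitted by $K \in C^2(\X\times\X)$, shows that $\ip{h_t}{h_t}_{\Hk}$, $\ip{h_s}{h_s}_{\Hk}$, and $\ip{h_t}{h_s}_{\Hk}$ all converge to the common value $\ip{y}{(\partial_{r^j}\partial_{s^j} K)(x,x)\,y}$ as $s,t\to 0$, so $\|h_t - h_s\|_{\Hk}^2 \to 0$. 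By completeness of $\Hk$ there is a limit $h^\star \in \Hk$.

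I would identify $h^\star$ with $\partial_j K_x y$ by passing the reproducing property through the limit: for any $z\in \X$ and $v\in \Y$, continuity of the inner product gives
\[
\ip{h^\star(z)}{v} = \ip{h^\star}{K_z v}_{\Hk} = \lim_{t\to 0}\ip{h_t}{K_z v}_{\Hk} = \lim_{t\to 0}\ip{h_t(z)}{v} = \ip{(\partial_j K_x y)(z)}{v},
\]
and since $v\in \Y$ is arbitrary this gives $h^\star = \partial_j K_x y$, proving (a). Part (b) then drops out: pairing any $f\in \Hk$ against $h_t$ via the reproducing property yields
\[
\ip{f}{h_t}_{\Hk} = \tfrac{1}{t}\bigl(\ip{f(x+te_j)}{y} - \ip{f(x)}{y}\bigr),
\]
whose right-hand side converges to $\ip{\partial f(x)/\partial x^j}{y}$ by the very definition of a partial derivative in $\reals^n$, while the left-hand side converges to $\ip{f}{\partial_j K_x y}_{\Hk}$ by norm convergence $h_t \to \partial_j K_x y$ and continuity of the inner product.

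The main obstacle I anticipate is the careful bookkeeping of the Taylor expansions in the Cauchy step: since $K$ is operator-valued and satisfies $K(r,s)^T = K(s,r)$, one must track which argument each derivative acts on and verify that all three bilinear forms really do share the same second-order limit. Beyond this, the argument is a clean application of $\Hk$-completeness and the reproducing property, with no additional technical machinery required.
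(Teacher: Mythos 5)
The paper does not prove this theorem --- it is imported verbatim from the cited reference of Micheli and Glaun\'es --- so there is no internal proof to compare against; your argument is the standard one used in that reference (and in Zhou's scalar-valued analogue cited elsewhere in the appendix), and it is correct: the difference quotients $h_t$ are Cauchy because each of the three inner products reduces, via $\ip{K_a y}{K_b y}_{\Hk}=\ip{y}{K(a,b)y}$, to a mixed second difference of the $C^2$ scalar function $(t,s)\mapsto \ip{y}{K(x+te_j,x+se_j)y}$ converging to its mixed second partial at the origin, the limit is identified with $\partial_j K_x y$ by pointwise evaluation through the reproducing property, and (b) follows by continuity of the inner product. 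The one point worth tightening is in part (b): the existence of $\partial f(x)/\partial x^j$ for an arbitrary $f\in\Hk$ is not a prior fact but a byproduct of your own limit --- the convergence of $\ip{f}{h_t}_{\Hk}$ is what forces the scalar difference quotient $t\mapsto \tfrac{1}{t}\bigl(\ip{f(x+te_j)}{y}-\ip{f(x)}{y}\bigr)$ to converge, and running this over a basis of $\Y$ yields the componentwise partial derivatives --- so you should present that implication in this direction rather than appealing to ``the very definition of a partial derivative'' as if differentiability of $f$ were already known.
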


As mentioned in Section~\ref{sec:soln}, the bi-linearity of constraint~\eqref{nat_contraction_W} forces us to adopt an alternating solution strategy whereby in the ``dynamics" sub-problem, $\{W,\wl,\wu\}$ are held fixed and we minimize $J_d$ with respect to $\{f,B,\lambda\}$. In the ``metric" sub-problem, $\{f,B,\lambda\}$ are held fixed and we minimize $J_m$ with respect to $\{W,\wl,\wu\}$.

In the following we derive several useful representer theorems to characterize the solution of the two sub-problems, under the two simplifying assumptions introduced in Section~\ref{sec:finite}. 

\subsection{Dynamics Sub-Problem}\label{sec:dyn_sub}

Let $K^f$ be the reproducing $\mathcal{C}^2$ kernel for an $\Y$-valued RKHS $\Hk^f$ and let $K^B$ be another reproducing kernel for an $\Y$-valued RKHS $\Hk^B$. Define the finite-dimensional subspaces:
\begin{align}
	\Vf &:= \left\{ \sum_{i=1}^{N_c} K^f_{\xs} a_i + \sum_{i=1}^{N_c} \sum_{p=1}^{n} \partial_p K^f_{\xs} a'_{ip},\quad a_i, a'_{ip} \in \reals^n \right\} \subset \Hk^f. \label{rep} \\
	\Vb &:= \left\{ \sum_{i=1}^{N_c} \Kb_{\xs} c_i + \sum_{i=1}^{N_c} \sum_{p=1}^{n} \partial_p \Kb_{\xs} c'_{ip},\quad c_i, c'_{ip} \in \reals^n \right\} \subset \Hk^B.\label{rep_b}
\end{align}
Note that all $\xs$ taken from the training dataset of $(\xs,\us,\dot{x}_i)$ tuples are a subset of $X_c$.

\begin{theorem}[Representer Theorem for $f,B$]\label{thm:rep_dyn}
Suppose the reproducing kernel $K^B$ is chosen such that all functions $g \in \Hkb$ satisfy the sparsity structure $g^j(x) = 0$ for $j = 1,\ldots,n-m$. Consider then the pointwise-relaxed dynamics sub-problem for problem~\eqref{prob_gen2}:
\begin{align}
&\min_{\substack{\hat{f} \in \Hk^{f}, \ \hat{b}_j \in \Hk^{B}, j =1,\ldots,m \\ \lambda \in \reals_{>0}}} && J_d(\hat{f},\hat{B})  \nonumber \\
&\qquad \text{s.t.} && F(\xs; \hat{f},W,\lambda) \preceq 0, \quad \forall \xs \in X_c. \label{lmi_pw1}
\end{align}
Suppose the feasible set for the LMI constraint is non-empty. Denote $f^*$ and $b_j^*, j = 1,\ldots, m$ as the optimizers for this sub-problem. Then, $f^* \in \Vf$ and $b_j \in \Vb$ for all $j = 1,\ldots, m$. 
\end{theorem}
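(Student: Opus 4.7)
The plan is to apply the standard Hilbert-space projection argument underlying Representer Theorems, enhanced with the derivative reproducing property (Theorem on Derivative Properties on $\Hk$) to handle the Jacobian dependence inside the LMI constraint. I would begin by fixing any feasible optimizer $(f^*, \{b_j^*\})$ (guaranteed by the nonempty feasible set assumption) and decomposing each component orthogonally as
\[
f^* = f_{\parallel} + f_{\perp}, \qquad b_j^* = b_{j,\parallel} + b_{j,\perp},
\]
where $f_{\parallel}$ is the orthogonal projection of $f^*$ onto $\Vf$ inside $\Hk^f$, and analogously $b_{j,\parallel}$ inside $\Hk^B$. By construction of $\Vf$ in \eqref{rep}, the residual $f_\perp$ satisfies $\langle f_\perp, K^f_{\xs} a\rangle_{\Hk^f}=0$ and $\langle f_\perp, \partial_p K^f_{\xs} a'\rangle_{\Hk^f}=0$ for every $\xs \in X_c$, every $p \in \{1,\ldots,n\}$, and every $a,a'\in \Y$, and similarly for $b_{j,\perp}$ in $\Vb$.

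Next, I would apply the reproducing property \eqref{RK_rep} and the derivative reproducing property from Theorem on Derivative Properties on $\Hk$ to conclude the pointwise vanishing identities
\[
f_\perp(\xs) = 0, \qquad \left.\frac{\partial f_\perp(x)}{\partial x^p}\right|_{x = \xs} = 0, \qquad \forall \xs \in X_c,\ \forall p,
\]
and $b_{j,\perp}(\xs)=0$ for every $\xs \in X_c$. The key locality observation to verify next is that both the regression loss and the LMI constraint at each $\xs\in X_c$ depend on the dynamics functions only through these pointwise values and Jacobians: the regression term $\|\hat f(\xs) + \hat B(\xs)u_i - \dot x_i\|_2^2$ depends solely on $\hat f(\xs)$ and $\hat B(\xs)$; and inspecting $F(x;\hat f, W,\lambda)$ in \eqref{nat_contraction_W} shows it depends on $\hat f$ only through $\partial_{\hat f} W(\xs) = \sum_p \hat f^p(\xs)\,\partial W/\partial x^p(\xs)$ and through $\wwidehat{(\partial \hat f/\partial x)(\xs) W(\xs)}$, while $B_\perp$ is the constant matrix in \eqref{B_perp} by Assumption~\ref{ass:B_simp}, so $F$ is independent of $\hat B$ altogether. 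Hence replacing $f^*$ by $f_\parallel$ and $b_j^*$ by $b_{j,\parallel}$ leaves the regression loss, the LMI constraint, and the constraint $\lambda \geq \delta_\lambda$ all unchanged.

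The conclusion then follows from the Pythagorean identity in the two RKHS norms:
\[
\|f^*\|^2_{\Hk^f} = \|f_\parallel\|^2_{\Hk^f} + \|f_\perp\|^2_{\Hk^f}, \qquad \|b_j^*\|^2_{\Hk^B} = \|b_{j,\parallel}\|^2_{\Hk^B} + \|b_{j,\perp}\|^2_{\Hk^B}.
\]
Because $\mu_f,\mu_b>0$, the decomposed candidate $(f_\parallel,\{b_{j,\parallel}\})$ is feasible and yields a strictly smaller objective unless $f_\perp = 0$ and $b_{j,\perp}=0$ for all $j$. Optimality of $(f^*,\{b_j^*\})$ therefore forces $f^*\in \Vf$ and $b_j^*\in \Vb$ for every $j$, as claimed.

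The main obstacle I expect is the middle step: rigorously verifying that the LMI $F(\xs;\hat f,W,\lambda)$ depends on $\hat f$ strictly through its value and Jacobian at $\xs$. This is where the choice to enlarge the representing subspace $\Vf$ to include the derivative sections $\partial_p K^f_{\xs} a'$ is essential, since without them a perturbation $f_\perp$ with $f_\perp(\xs)=0$ could still change $(\partial f/\partial x)(\xs)$ and violate feasibility. Everything else---Pythagoras, convexity of the penalty, and preservation of the regression residuals---is classical once locality is established, and the argument for $b_j^*$ is then strictly simpler because the LMI is blind to $\hat B$ under Assumption~\ref{ass:B_simp}, leaving only the regression fit and the norm penalty to drive $b_{j,\perp}$ to zero.
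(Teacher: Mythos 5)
Your proposal is correct and follows essentially the same route as the paper's proof: an orthogonal decomposition of $f^*$ and $b_j^*$ onto $\Vf$ and $\Vb$, the reproducing and derivative-reproducing properties to show the orthogonal components affect neither the data-fit term nor the LMI constraint (which, under Assumption~\ref{ass:B_simp}, is independent of $\hat{B}$ and depends on $\hat{f}$ only through its values and Jacobians at the points of $X_c$), and the Pythagorean identity to force the orthogonal components to zero. The only cosmetic difference is that you phrase the argument via pointwise vanishing of $f_\perp$ and its Jacobian at the constraint points, whereas the paper expands the objective and the entries of $F$ explicitly as inner products against elements of $\Vf$; these are the same computation.
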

\begin{proof}
For a fixed $W$, the constraint is convex in $\hat{f}$ by linearity of the matrix $F$ in $\hat{f}$ and $\partial \hat{f}/\partial x$. By assumption~\eqref{ass:B_simp}, and the simplifying form for $B_{\perp}$, the matrix $F$ is additionally independent of $B$. Then, by strict convexity of the objective functional, there exists unique minimizers $f^*, b_j^* \in \Hk$, provided the feasible region is non-empty~\cite{KurdilaZabarankin2006}. 

Since $\Vf$ and $\Vb$ are closed, by the Projection theorem, $\Hk^f = \Vf \oplus \Vf^{\perp}$ and $\Hkb = \Vb \oplus \Vb^{\perp}$. Thus, any $g \in \Hk^f$ may be written as $g^{\Vf} + g^{\perp}$ where $g^{\Vf} \in \Vf$, $g^{\perp} \in \Vf^\perp$, and $\ip{g^{\Vf}}{g^{\perp}}_{\Hk^f} = 0$. Similarly, any $g \in \Hkb$ may be written as $g^{\Vb} + g^{\perp}$ where $g^{\Vb} \in \Vb$, $g^{\perp} \in \Vb^\perp$, and $\ip{g^{\Vb}}{g^{\perp}}_{\Hkb} = 0$.

Let $f = f^{\Vf} + f^{\perp}$ and $b_j = b_{j}^{\Vb} + b_{j}^\perp$, and define 
\[
	\begin{split}
	H^{\V}(x,u) &= f^{\Vf}(x) + \sum_{j=1}^{m} u^j b_j^{\Vb} (x) \\ 
	H^{\perp}(x,u) &= f^{\perp}(x) + \sum_{j=1}^{m} u^j b_j ^\perp (x).
	\end{split}
\]
We can re-write $J_d(f,B)$ as:
\[
\begin{split}
	J_d(f,B) &= \sum_{i=1}^{N} \ip{H^{\V} (x_i, u_i) - \dot{x}_i}{ H{^\V} (x_i,u_i) - \dot{x}_i} + 2\ip{H^{\V} (x_i, u_i) - \dot{x}_i}{H^\perp (x_i,u_i)}  \\
	&\qquad \qquad + \ip{H^\perp(x_i, u_i)}{H^\perp(x_i, u_i)} +  \mu_f \left( \|f^{\Vf}\|_{\Hk^f}^2 + \|f^\perp \|_{\Hk^f}^2 \right)  \\
	&\qquad \qquad + \mu_b \left( \sum_{j=1}^{n} \|b_j^{\Vb} \|_{\Hkb}^2 + \|b_j^\perp \|_{\Hkb}^2 \right). \\
\end{split}
\]
Now, 
\[
	\begin{split}
	&\ip{H^{\V} (\xs, u_i) - \dot{x}_i}{H^\perp (\xs,u_i)}  \\
				&\qquad = \ip{f^\perp (\xs) + \sum_{j=1}^{m} u_i^j b_j^\perp(\xs)}{H^\V (\xs, u_i) - \dot{x}_i} \\
				&\qquad = \underbrace{\ip{f^\perp}{K_{\xs}^f \left(H^\V (\xs, u_i) - \dot{x}_i \right)}_{\Hk}}_{=0} + \underbrace{\ip{\sum_{j=1}^{m} u_i^j b_j^\perp}{\Kb_{\xs} \left(H^\V (\xs, u_i) - \dot{x}_i \right)}_{\Hkb}}_{=0} 
	\end{split}
\]
since $K_{\xs}^f \left(H^\V (\xs, u_i) - \dot{x}_i \right) \in \Vf$, $\Vb^\perp$ is closed under addition, and $\Kb_{\xs} \left(H^\V (\xs, u_i) - \dot{x}_i \right) \in \Vb$. Thus, $J_d(f,B)$ simplifies to
\begin{equation}
\begin{split}
	\sum_{i=1}^{N} \left\| H^{\V} (\xs, u_i) - \dot{x}_i \right\|_2^2 &+ \mu_f  \|f^{\Vf}\|_{\Hk^f}^2 + \mu_b \sum_{j=1}^{n} \|b_j^{\Vb} \|_{\Hk^B}^2  + \\
	&+\left[ \sum_{i=1}^{N}  \left\| H^\perp (\xs, u_i) \right\|_2^2 +  \mu_f \|f^\perp \|_{\Hk^f}^2  + \mu_b\sum_{j=1}^{n}  \|b_j^\perp \|_{\Hk^B}^2 \right].
\end{split}
\label{obj_simp}
\end{equation}
Now, the $(p,q)$ element of $\partial_f W(\xs)$ takes the form 
\[
	\ip{ \frac{ \partial w_{pq} (\xs)}{\partial x}}{f (\xs)} = \ip{f}{K_{\xs}^f\frac{ \partial w_{pq} (\xs)}{\partial x}}_{\Hk^f} =  \ip{f^{\Vf}}{K^f_{\xs}\frac{ \partial w_{pq} (\xs)}{\partial x}}_{\Hk^f}.
\]
Column $p$ of $\frac{\partial f(\xs)}{\partial x} W(\xs)$ takes the form
\[
	\begin{split}
	\sum_{j=1}^{n} w_{j p} (\xs) \dfrac{\partial f(\xs)}{\partial x^j}  = \begin{bmatrix} \sum_{j=1}^{n}  w_{jp}(\xs) \ip{ \frac{\partial f(\xs)}{\partial x^j} }{ e_1} \\ \vdots \\ 	\sum_{j=1}^{n}  w_{jp}(\xs) \ip{ \frac{\partial f(\xs)}{\partial x^j} }{ e_n}	\end{bmatrix} & = 
												       	 \begin{bmatrix} \sum_{j=1}^{n}  w_{jp}(\xs) \ip{ f }{\partial_j K^f_{\xs} e_1}_{\Hk^f} \\ \vdots \\ 	\sum_{j=1}^{n}  w_{jp}(\xs) \ip{ f }{\partial_j K^f_{\xs} e_n}_{\Hk^f}	\end{bmatrix} \\
					&=  \begin{bmatrix} \sum_{j=1}^{n}  w_{jp}(\xs) \ip{ f^{\Vf} }{\partial_j K^f_{\xs} e_1}_{\Hk^f} \\ \vdots \\ 	\sum_{j=1}^{n}  w_{jp}(\xs) \ip{ f^{\Vf} }{\partial_j K^f_{\xs} e_n}_{\Hk^f}	\end{bmatrix},
	\end{split}												       	 
\]
where $e_i$ is the $i^{\text{th}}$ standard basis vector in $\reals^n$. Thus, $f^\perp$ plays no role in pointwise relaxation of constraint~\eqref{nat_contraction_W} and thus does not affect problem feasibility.
Given assumption~\ref{ass:B_simp}, $b^{\perp}$ also has no effect on problem feasibility. Thus, by non-negativity of the term in the square brackets in~\eqref{obj_simp}, we have that the optimal $f$ lies in $\Vf$ and optimal $b_j$ lies in $\Vb$. \qed
\end{proof}

The key consequence of this theorem is the reduction of the infinite-dimensional search problem for the functions $f$ and $b_j, j=1,\ldots,m$ to a finite-dimensional \emph{convex} optimization problem for the constant vectors $a_i,a'_{ip},\{c_i^{(j)}, c_{ip}^{(j)'}\}_{j=1}^{m} \in \reals^n$, by choosing the function classes $\mathcal{H}^f = \Hk^{f}$ and $\mathcal{H}^B = \Hk^{B}$. Next, we characterize the optimal solution to the metric sub-problem.

\subsection{Metric Sub-Problem}\label{sec:met_sub}

By the simplifying assumption in Section~\ref{sec:B_simp}, constraint~\eqref{killing_A} requires that $W_{\perp}$ be only a function of the first $(n-m)$ components of $x$. Thus, define $\kw : \X \times \X \rightarrow \reals$ as a \emph{scalar} reproducing kernel with associated real-valued scalar RKHS $\Hw$. Additionally, define $\kwp : \X \times \X \rightarrow \reals$ as another scalar reproducing kernel with associated real-valued scalar RKHS $\Hwp$. In particular, $\kwp$ is only a function of the first $(n-m)$ components of $x \in \X$ in both arguments.  

For all $(p,q) \in \{1,\ldots,(n-m)\}$, we will take $w_{pq} \in \Hwp$ (corresponding to $W_{\perp}$) and all other entries of $W$ as functions in $\Hw$. Define the kernel derivative functions:
\[
    \partial_j \kw_{x}(z) := \left.\dfrac{\partial \kw}{\partial r^j} \kw(r,s)\right|_{(r=x,s=z)} \quad \partial_j \kwp_{x}(z) := \left.\dfrac{\partial \kwp}{\partial r^j} \kw(r,s)\right|_{(r=x,s=z)} \quad \forall z \in \X.
\]
From~\cite{Zhou2008}, it follows that the kernel derivative functions satisfy the following two properties, similar to Theorem~\ref{thm:RK_der}:
\[
    \begin{split}
            \partial_j \kw_{x} \in \Hw \quad \forall j = 1,\ldots,n,\  x \in \X \\
            \dfrac{\partial f}{\partial x^j}(x) = \ip{f}{\partial_j \kw_x}_{\Hw}, \quad \forall f \in \Hw.
    \end{split}
\]
A similar property holds for $\kwp$ and $\Hwp$. Consider then the following finite-dimensional spaces:
\begin{align}
	\V_{\kw} &:= \left\{ \sum_{i=1}^{N_c}  a_i \kw_{\xs} + \sum_{i=1}^{N_c} \sum_{p=1}^{n}  a_{ip}' \,  \partial_p \kw_{\xs} ,\quad a_i, a_{ip}' \in \reals \right\} \subset \Hw \\
	\V_{\kwp} & := \left\{ \sum_{i=1}^{N_c} c_i \kwp_{\xs} + \sum_{i=1}^{N_c} \sum_{p=1}^{n}  c_{ip}' \,  \partial_p \kwp_{\xs} ,\quad c_i, c_{ip}' \in \reals \right\} \subset \Hwp 
\end{align}
and the proposed representation for $W(x)$:
\begin{equation}
	\begin{split}
	W(x) =  &\sum_{i=1}^{N_c} \hat{\Theta}_i \kwp_{\xs} (x) + \sum_{i=1}^{N_c}\sum_{j=1}^{n} \hat{\Theta}'_{ij} \partial_{j}\kwp_{\xs} (x)  \\
		+  & \sum_{i=1}^{N_c} \Theta_i \kw_{\xs} (x) + \sum_{i=1}^{N_c}\sum_{j=1}^{n} \Theta'_{ij} \partial_{j}\kw_{\xs} (x),
	\end{split}
\label{W_rep}
\end{equation}
where $\hat{\Theta}, \hat{\Theta}' \in \Sj_{n}$ are constant symmetric matrices with non-zero entries only in the top-left $(n-m)\times(n-m)$ block, and $\Theta,\Theta' \in \Sj_{n}$ are constant symmetric matrices with zero entries in the top-left $(n-m)\times(n-m)$ block.

\begin{theorem}[Representer Theorem for $W$]\label{thm:rep_W}
Consider the pointwise-relaxed metric sub-problem for problem~\eqref{prob_gen2}:
\begin{align}
&\min_{\substack{w_{pq} \in \Hwp, (p,q) \in \{1,\ldots,(n-m)\} \\ w_{pq} \in \Hw \text{ else} \\ \wl,\wu \in \reals_{>0}}} && J_m(W,\wl,\wu)  \nonumber \\
&\qquad \text{subject to} && F(\xs; \hat{f}, W, \lambda) \preceq 0, \quad \forall \xs \in X_c, \label{lmi_pw2} \\
&\qquad && \wl I_n \preceq W(\xs) \preceq \wu I_n, \quad \forall \xs \in X_c, \label{lmi_pw3}
\end{align}
Suppose the feasible set of the above LMI constrains is non-empty. Denote $W^*$ as the optimizer for this sub-problem. Then, $W^*$ takes the form given in~\eqref{W_rep}.
\end{theorem}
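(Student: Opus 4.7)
The plan is to adapt the projection-theorem argument from the proof of Theorem~\ref{thm:rep_dyn}. Because $\V_{\kw}$ and $\V_{\kwp}$ are finite-dimensional, and hence closed, subspaces of $\Hw$ and $\Hwp$ respectively, the Projection Theorem yields the orthogonal decompositions $\Hw = \V_{\kw} \oplus \V_{\kw}^{\perp}$ and $\Hwp = \V_{\kwp} \oplus \V_{\kwp}^{\perp}$. For each scalar entry of $W$, I would write $w_{pq} = w_{pq}^{\V} + w_{pq}^{\perp}$, where the projection $w_{pq}^{\V}$ lives in $\V_{\kwp}$ when $(p,q) \in \{1,\ldots,n-m\}^2$ and in $\V_{\kw}$ otherwise; the symmetry $w_{pq} = w_{qp}$ is preserved automatically since equal functions project to equal elements.

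Next I would argue that the constraints~\eqref{lmi_pw2} and~\eqref{lmi_pw3} are independent of the orthogonal components $w_{pq}^{\perp}$. Both constraints involve only pointwise evaluations of $W(\xs)$ and its Lie derivative $\partial_{\hat{f}} W(\xs)$ at the finite set $\xs \in X_c$. By the scalar reproducing and derivative-reproducing properties,
\[
w_{pq}(\xs) = \ip{w_{pq}}{\kw_{\xs}}_{\Hw}, \qquad \frac{\partial w_{pq}}{\partial x^{j}}(\xs) = \ip{w_{pq}}{\partial_{j}\kw_{\xs}}_{\Hw},
\]
and both $\kw_{\xs}$ and $\partial_{j}\kw_{\xs}$ lie in $\V_{\kw}$ by construction, with the analogous statement holding in $\V_{\kwp}$ for entries in the upper-left block. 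Consequently $\ip{w_{pq}^{\perp}}{\kw_{\xs}}_{\Hw} = 0 = \ip{w_{pq}^{\perp}}{\partial_{j}\kw_{\xs}}_{\Hw}$, so every quantity entering $F(\xs;\hat{f},W,\lambda)$ -- namely the Lie derivative $\partial_{\hat{f}}W(\xs)$, the product $\frac{\partial \hat{f}(\xs)}{\partial x}W(\xs)$, and $W(\xs)$ itself -- depends only on the $\V$ components. The same is true of the bounds $\wl I_n \preceq W(\xs) \preceq \wu I_n$. Feasibility is therefore unchanged when each $w_{pq}^{\perp}$ is set to zero.

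Finally I would show that the objective weakly decreases under this zeroing. The term $\wu - \wl$ has no dependence on $w_{pq}^{\perp}$, and taking $\mathcal{H}^{W}$ to be the product space with norm
\[
\|W\|^{2}_{\mathcal{H}^{W}} = \sum_{(p,q)\in\{1,\ldots,n-m\}^{2}} \|w_{pq}\|^{2}_{\Hwp} + \sum_{(p,q)\notin\{1,\ldots,n-m\}^{2}} \|w_{pq}\|^{2}_{\Hw},
\]
the regularizer splits orthogonally as $\sum_{p,q}\bigl(\|w_{pq}^{\V}\|^{2} + \|w_{pq}^{\perp}\|^{2}\bigr)$. Setting every $w_{pq}^{\perp}$ to zero thus preserves feasibility while weakly decreasing the cost, so any optimizer $W^{*}$ must satisfy $w_{pq}^{\perp} = 0$ for all $(p,q)$. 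Expanding each surviving $w_{pq}^{\V}$ in the generators $\{\kwp_{\xs},\partial_{j}\kwp_{\xs}\}$ or $\{\kw_{\xs},\partial_{j}\kw_{\xs}\}$ and repackaging the scalar coefficients into symmetric matrices $\hat{\Theta}_{i},\hat{\Theta}'_{ij},\Theta_{i},\Theta'_{ij} \in \Sj_{n}$ with the prescribed block-sparsity produces precisely the representation~\eqref{W_rep}.

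The main obstacle will be the block-sparsity bookkeeping: the upper-left $(n-m)\times(n-m)$ entries must live in $\Hwp$ so that the reduced form of constraint~\eqref{killing_A} remains satisfied, while the other entries live in the richer $\Hw$. This asymmetry threads through both the orthogonal decomposition (two different ambient RKHSs) and the assembly of the symmetric coefficient matrices $\hat{\Theta},\hat{\Theta}',\Theta,\Theta'$ whose supports are the complementary blocks. Apart from this bookkeeping -- and the need to verify that symmetry of $W$ survives projection entrywise -- the argument is a mechanical analogue of Theorem~\ref{thm:rep_dyn}.
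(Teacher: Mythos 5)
Your argument is correct and follows essentially the same route as the paper's proof: both use the (derivative-)reproducing properties to show that the pointwise LMI constraints depend only on the projections of the entries $w_{pq}$ onto $\V_{\kw}$ and $\V_{\kwp}$, and then invoke the strict convexity of the squared-norm regularizer (the $\wu-\wl$ term being unaffected) to eliminate the orthogonal components. One small wording fix: the cost decrease upon zeroing a nonzero $w_{pq}^{\perp}$ is \emph{strict} because $\mu_w>0$, and it is this strictness--not the ``weak'' decrease you state--that lets you conclude that \emph{every} optimizer, rather than merely some optimizer, lies in the finite-dimensional subspace.
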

\begin{proof}
Notice that while the regularizer term is strictly convex, the surrogate loss function for the condition number is affine. However, provided the feasible set is non-empty, there still exists a minimizer (possible non-unique) for the above sub-problem. 

Now, notice that the $(p,q)$ element of $\partial_{f} W_{\perp} (\xs)$ takes the form
\[
	\begin{split} 
		\partial_{f} w_{\perp_{pq}}(\xs) &=\sum_{j=1}^{n} \dfrac{\partial w_{\perp_{pq}} (\xs)}{\partial x^j} f^j (\xs) \\
								&= \sum_{j=1}^{n} \ip{w_{\perp_{pq}}}{\partial_{j}\kwp_{\xs}}_{\Hwp} f^j(\xs) \\
								&= \ip{w_{\perp_{pq}}}{ \sum_{j=1}^{n} f^j(\xs) \partial_{j}\kwp_{\xs}}_{\Hwp} \\
								&= \ip{w^{\V_{\kwp}}_{\perp_{pq}}}{ \sum_{j=1}^{n} f^j(\xs) \partial_{j}\kwp_{\xs}}_{\Hwp} 
	\end{split}
\]
Additionally, the $(p,q)$ element of $W(\xs)$ takes the form
\[
		w_{pq}(\xs) = \begin{cases}  \ip{w^{\V_{\kwp}}_{pq}}{\kwp_{\xs}}_{\Hwp} &\text{ if } (p,q) \in \{1,\ldots,(n-m)\} \\
						 	    \ip{w^{\V_\kw}_{pq}}{\kw_{\xs}}_{\Hw} &\text{ else.}
					\end{cases}
\]
That is, constraints~\eqref{nat_contraction_W} and uniform definiteness at the constraint points in $X_c$ can be written in terms of functions in $\V_\kw$ and $\V_{\kwp}$ alone. By strict convexity of the regularizer, and recognizing that $W(x)$ is symmetric, the result follows. \qed 
\end{proof}

Similar to the previous section, the key property here is the reduction of the infinite-dimensional search over $W(x)$ to the finite-dimensional convex optimization problem over the constant symmetric matrices $\Theta_i,\Theta_{ij}',\hat{\Theta}_i,\hat{\Theta}'_{ij}$ by choosing the function class for the entries of $W$ using the scalar-valued RKHS.

At this point, both sub-problems are finite-dimensional convex optimization problems. Crucially, the only simplifications made are those given in Section~\ref{sec:finite}. However, a final computational challenge here is that the number of parameters scales with the number of training $N$ and constraint $N_c$ points. This is a fundamental challenge in all non-parametric methods. In the next section we present the dimensionality reduction techniques used to alleviate these issues.

\subsection{Approximation via Random Matrix Features}
\label{sec:feat}

The size of the problem using full matrix-valued kernel expansions in grows rapidly in $N_c n$, the number of constraint points times the state dimensionality. This makes training slow for even moderately long demonstrations even in low-dimensional settings. The induced dynamical system is slow to evaluate and integrate at inference time. Random feature approximations to kernel functions have been extensively used to scale up training complexity and inference speed of kernel methods~\cite{RahimiRecht2007, HuangAvronEtAl2014} in a number of applications.  The quality of approximation can be explicitly controlled by the number of random features. In particular, it has been shown~\cite{RahimiRecht2008} that any function in the RKHS associated with the exact kernel can be approximated to arbitrary accuracy by a linear combination of sufficiently large number of random features. 

These approximations have only recently been extended to matrix-valued kernels~\cite{Minh2016,BraultHeinonenEtAl2016}. Given a matrix-valued kernel $K$, one defines an appropriate matrix-valued feature map $\Phi : \X \rightarrow \reals^{d\times n}$ with the property 
\[
    K(x,z) \approx \Phi(x)^T \Phi(z),
\]
where $d$ controls the quality of this approximation. With such an approximation, notice that \emph{any} function $g$ in the associated RKHS $\Hk$ can be re-parameterized as:
\[
    g(x) = \sum_{i=1}^{N_c} K(x,\xs) a_i \approx \sum_{i=1}^{N_c} \Phi(x)^T \Phi(\xs) a_i = \Phi(x)^T \alpha,
\]
where $\alpha = \sum_{i=1}^{N_c} \Phi(\xs)a_i \in \reals^d$. Thus, the function $g$ is now summarized by the $d-$ dimensional vector $\alpha$ instead of $N_c$ vectors in $\Y$ ($N_c n$ parameters). 
Applying a similar trick to:
\[
    \dfrac{\partial}{\partial s^j} K(x,z) \approx \left. \Phi(x)^T \dfrac{\partial}{\partial s^j} \Phi(s) \right|_{s =z},
\]
one can approximate terms of the form
\[
    \sum_{i=1}^{N_c}\sum_{j=1}^{n} \partial_j K_{\xs} c_{ij} (x)
\]
by $\Phi(x)^T v$ for some $v \in \reals^d$. Applying this approximation to the subspaces in~\eqref{rep},~\eqref{rep_b}, and~\eqref{W_rep}, we finally arrive at the function parameterizations\footnote{To apply this decomposition to $W(x)$, we simply leverage vector-valued feature maps for each entry of $W(x)$.} in eqs.~\eqref{param_1}--\eqref{param_2}, and the constraints reformulation in problem~\eqref{learn_finite} in terms of the constant vectors. The regularization terms in the objective follow from the definition of the inner product in $\Hk$:
\small{
\[ 
\begin{split}
    \|g\|_{\Hk}^2 = \sum_{i,j=1}^{N_c} \ip{a_i}{K(x_i,x_j)a_j} \approx \sum_{i,j=1}^{N_c} \ip{\Phi(x_i)a_i}{\Phi(x_j)a_j} &= \ip{\sum_{i=1}^{N_c}\Phi(x_i)a_i}{\sum_{j=1}^{N_c}\Phi(x_j)a_j} \\
    &= \|\alpha\|_2.
\end{split}
\]
}\normalsize
A canonical example is the the Gaussian separable kernel $K_{\sigma}(x,z) := e^{\frac{-\|x-z\|_2^2}{\sigma^2}} I_n$ with feature map:
\[
    \dfrac{1}{\sqrt{s}}\begin{bmatrix} \cos(\omega_1^T x) \\ \sin(\omega_1^T x) \\ \vdots \\ \cos(\omega_s^T x) \\ \sin(\omega_s^T x)    \end{bmatrix} \otimes I_n 
\]
where $\omega_1,\ldots,\omega_s$ are i.i.d. draws from $\mathcal{N}(0,\sigma^{-2}I_n)$. 

\section{Solution Parameters for PVTOL}
\label{app:prob_params}

Notice that the true input matrix for the PVTOL system satisfies Assumption~\ref{ass:B_simp}. Furthermore, it is a constant matrix. Thus, the feature mapping $\Phi_b$ is therefore just a constant matrix with the necessary sparsity structure.

The feature matrix for $f$ was generated using the Random Fourier approximation to the Gaussian separable matrix-valued kernel (see Appendix~\ref{sec:feat}) with $\sigma = 6$ and $s = 8n = 48$ sampled Gaussian directions, yielding a feature mapping matrix $\Phi_f$ with $d_f = 576$ (96 features for each component of $f$). The scalar-valued reproducing kernels for the entries of $W$ were taken to be the Gaussian kernel with $\sigma = 15$. To satisfy condition~\eqref{killing_A}, the kernel for $w_{ij}, \ (i,j) \in \{1,\ldots,(n-m)\}$ was only a function of the first $n-m$ components of $x$. A total of $s = 36$ Gaussian samples were taken to yield feature vectors $\phi_w$ and $\hat{\phi}_w$ of dimension $d_w = 72$.  Furthermore, by symmetry of $W(x)$ only $n(n+1)/2$ functions were actually parameterized. Thus, the learning problem in~\eqref{learn_finite} comprised of $d_f + d_w n(n+1)/2 + 4 = 508$ parameters for the functions, plus the extra scalar constants $\lambda,\wl,\wu$.

The learning parameters used were: model N-R (all $N$): $\mu_f = 0, \mu_b = 10^{-6}$, R-R (all $N$): $\mu_f = 10^{-6}, \mu_b = 10^{-6}$, CCM-R (all $N$): $\mu_f = 10^{-3}, \mu_b = 10^{-6}$; $N \in \{100,250,500\}: \mu_w = 10^{-3}$, $N = 1000: \mu_w = 10^{-4}$. Tolerance parameters: constraints: $\{\delta_{\lambda},\epsilon_{\lambda},\delta_{\wl}, \epsilon_{\wl}\} = \{0.01,0.01,0.01,0.01\}$; discard tolerance $\delta = 0.05$. Note that a small penalization on $\mu_b$ was necessary for all models due to the fact that feature matrix $\Phi_b$ is rank deficient.

\section{CCM Controller Synthesis} \label{ccm_appendix}

Let $\Gamma(p,q)$ be the set of smooth curves $c:[0,1] \rightarrow \X$ satisfying $c(0) = p, c(1) = q$, and define $\delta_c(s) := \partial c(s)/\partial s$. At each time $t$, given the nominal state/control pair $(x^*,u^*)$ and current actual state $x$:
\begin{leftbox}
\begin{itemize}[leftmargin=.4in]
    \item[{\bf Step 1:}] Compute a curve $\gamma \in \Gamma(x^*,x)$ defined by:
\begin{equation}
    \gamma \in \argmin_{c\in \Gamma(x^*, x)} \int_{0}^{1} \delta_c(s)^T M(c(s)) \delta_c (s) ds,
\end{equation}
    and let $\mathcal{E}$ denote the minimal value.
    \item[{\bf Step 2:}] Define: 
    \begin{equation}
    \begin{split}
        \mathcal{E}_d(k):= &2 \delta_{\gamma}(1)^T M(x) (f(x) + B(x) (u^* + k)) \\
        &- 2\delta_{\gamma}(0)^T M(x^*) (f(x^*) + B(x^*)u^*).
    \end{split}
    \label{ccm_ineq}
    \end{equation}
    \item[{\bf Step 3:}] Choose $k(x^*,x)$ to be any element of the set:
    \begin{equation}
        \mathcal{K} := \left\{ k : \mathcal{E}_d(k) \leq -2\lambda \mathcal{E} \right\}.
    \label{ccm_cntrl}
    \end{equation}
\end{itemize}
\end{leftbox}

By existence of the metric/differential controller pair $(M,\delta_u)$, the set $\mathcal{K}$ is always non-empty. The resulting $k(x^*,x)$ then ensures that the solution $x(t)$ indeed converges towards $x^*(t)$ exponentially~\cite{SinghMajumdarEtAl2017}.

From an implementation perspective, note that having obtained the curve $\gamma$, constraint~\eqref{ccm_cntrl} is simply a linear inequality in $k$. Thus, one can analytically compute a feasible feedback control value, e.g., by searching for the smallest (in any norm) element of $\mathcal{K}$; for additional details, we refer the reader to~\cite{ManchesterSlotine2017,SinghMajumdarEtAl2017}.

\end{document}